\theoremstyle{plain}
\newtheorem{theorem}{Theorem}[section]
\newtheorem{corollary}[theorem]{Corollary}
\newtheorem{definition}[theorem]{Definition}
\newtheorem{lemma}[theorem]{Lemma}
\newtheorem{proposition}[theorem]{Proposition}
\newtheorem{remark}{Remark}
\newtheorem*{problem statement}{Problem Statement}
\newtheorem{feasibility problem}{Problem}
\newtheorem{assumption}{Assumption}
\definecolor{myred}{RGB}{108,0,0}
\newcommand{\real}{\mathbb{R}}
\newcommand{\reg}{\mathcal{R}}
\newcommand{\setz}{\mathrm{Z}}
\newcommand{\setw}{\mathrm{W}}
\newcommand{\sete}{\mathrm{E}}
\newcommand{\init}{\mathcal{I}}
\newcommand{\reach}{\mathcal{X}}
\newcommand{\cone}{\mathcal{C}}
\begin{document}
\title{Region-Based Self-Triggered Control for Perturbed and Uncertain Nonlinear Systems}
\author{Giannis Delimpaltadakis and Manuel Mazo Jr.\thanks{The authors are with the Delft Center for Systems and Control, Delft university of Technology, Delft 2628CD, The Netherlands. Emails:\texttt{\{i.delimpaltadakis, m.mazo\}@tudelft.nl}. This work is supported by the ERC Starting Grant SENTIENT (755953).}
\date{}}
\maketitle
\graphicspath{{./Figures/}}
\begin{abstract}
	In this work, we derive a region-based self-triggered control (STC) scheme for nonlinear systems with bounded disturbances and model uncertainties. The proposed STC scheme is able to guarantee different performance specifications (e.g. stability, boundedness, etc.), depending on the event-triggered control (ETC) triggering function that is chosen to be emulated. To deal with disturbances and uncertainties, we employ differential inclusions (DIs). By introducing ETC/STC notions in the context of DIs, we extend well-known results on ETC/STC to perturbed uncertain systems. Given these results, and adapting tools from our previous work, we derive inner-approximations of isochronous manifolds of perturbed uncertain ETC systems. These approximations dictate a partition of the state-space into regions, each of which is associated to a uniform inter-sampling time. At each sampling time instant, the controller checks to which region the measured state belongs and correspondingly decides the next sampling instant.
\end{abstract}
\section{Introduction}
Nowadays, the use of shared networks and digital platforms for control purposes is becoming more and more ubiquitous. This has shifted the control community's research focus from periodic to aperiodic sampling techniques, which promise to reduce resource utilization (e.g. bandwidth, processing power, etc.). Arguably, event-based control is the aperiodic scheme which has attracted wider attention, with its two sub-branches being Event-Triggered Control (ETC, e.g. \cite{arzen1999,astrom2002comparison,tabuada2007etc,small_gain_robust_etc, girard2015dynamicetc}) and Self-Triggered Control (STC, e.g. \cite{velasco2003self,tosample, mazo2010stc_iss, wang2010self, small_gain_robust_etc, italy_digital_stc, tiberi_stc_perturbed, med_stc, delimpaltadakis_tac}). For an introduction to the topic, the reader is referred to \cite{2012introtoetc_stc}.

ETC and STC are sample-and-hold implementations of digital control. In ETC, intelligent sensors monitor continuously the system's state, and transmit data only when a state-dependent \textit{triggering condition} is satisfied. On the other hand, to tackle the necessity of dedicated intelligent hardware, STC has been proposed, in which the controller at each sampling time instant decides the next one based solely on present measurements. The most common way to decide the next sampling time in STC is the emulation approach: predicting conservatively when the triggering condition of a corresponding ETC scheme would be satisfied. In this way, STC provides the same performance guarantees as the underlying ETC scheme, although it generally leads to faster sampling.

Unfortunately, published work regarding STC for perturbed uncertain nonlinear systems remains very scarce. In \cite{small_gain_robust_etc}, ETC and STC schemes are designed for input-to-state stable (ISS) systems subject to disturbances, by employing a small-gain approach. To address model uncertainties, the authors consider nonlinear systems in strict-feedback form and propose a control-design procedure that compensates for the uncertain dynamics, in a way such that the previously derived STC scheme would still guarantee stability. In \cite{italy_digital_stc}, a self-triggered sampler is derived, that guarantees that the system remains in a safe set, by employing Taylor approximations of the Lyapunov function's derivative. Finally, \cite{tiberi_stc_perturbed} designs ETC and STC that guarantee uniform ultimate boundedness for perturbed uncertain systems, while \cite{med_stc} employs the small-gain approach to design STC that guarantees $\mathcal{L}_p$-stability. Alternative approaches relying on a stochastic framework and learning techniques have also been proposed, see e.g. \cite{hashimoto2020learning, probabilistic}. In contrast to the robust approaches listed earlier, they can cope with potentially unbounded disturbances, but they relax the performance guarantees to probabilistic assurances.

Here, we extend the region-based STC framework of \cite{delimpaltadakis_tac} and propose an STC scheme for general nonlinear systems with bounded disturbances/uncertainties, providing deterministic guarantees. This framework is able to emulate a wide range of triggering conditions and corresponding ETC schemes in a unified generic way. Hence, compared to the deterministic approaches listed above, which focus on emulating one class of triggering conditions and provide one specific performance specification, it is more versatile, as it can provide different robust performance guarantees (stability, safety, boundedness, etc.), depending on the ETC scheme that is emulated.

Particularly, in \cite{delimpaltadakis_tac} a region-based STC scheme for smooth nonlinear systems has been proposed, which provides inter-sampling times that lower bound the ideal inter-sampling times of an a-priori given ETC scheme. The state-space is partitioned into regions $\reg_i$, each of which is associated to a uniform inter-sampling time $\tau_i$. The regions $\reg_i$ are sets delimited by inner-approximations of \textit{isochronous manifolds} (sets composed of points in the state-space that correspond to the same ETC inter-sampling time). In a real-time implementation, the controller checks to which of the regions the measured state belongs, and correspondingly decides the next sampling time.

Here, we extend the above framework to systems with disturbances and uncertainties, which greatly facilitates the applicability of region-based STC in practice. To deal with disturbances and uncertainties in a unified way, we abstract perturbed uncertain systems by differential inclusions (DIs). Moreover, we introduce ETC notions, such as the inter-sampling time, to the DI-framework. Within the DI-framework, by employing the notion of homogeneous DIs (see \cite{bernuau2013homogeneous_diff_incl}), we are able to extend well-known significant results on ETC/STC, namely the scaling law of inter-sampling times \cite{tosample} and the homogenization procedure \cite{anta2012isochrony}, to perturbed uncertain systems. Based on these renewed results, we construct approximations of isochronous manifolds of perturbed uncertain ETC systems, thus extending region-based STC to perturbed uncertain systems. We showcase our theoretical results via simulations and comparisons with other deterministic approaches, which indicate that the proposed STC scheme shows competitive performance, while simultaneously achieving greater generality.

Apart from the above, let us emphasize that the merits of obtaining approximations of isochronous manifolds extend beyond the context of STC design, as it enables discovering relations between regions in the state-space of an ETC system and inter-sampling times. In fact, such approximations have already been used to construct advanced timing models, that capture the sampling behaviour of unperturbed homogeneous ETC systems, and are then used for traffic scheduling in networks of ETC loops (see \cite{delimpa2020traffic}). More generally, as noted in \cite{delimpaltadakis_tac}, isochronous manifolds are an inherent characteristic of any system with an output. Thus, implications of the theoretical contribution of approximating them, especially under the effect of disturbances and uncertainties, might even exceed the mere context of ETC/STC.

To summarize our contributions, in this work we:
\begin{itemize}
	\item construct a framework based on DIs that allows reasoning about perturbed uncertain ETC systems,
	\item extend important results on ETC/STC to perturbed uncertain systems, by employing the DI-framework,
	\item obtain approximations of isochronous manifolds of perturbed uncertain ETC systems,
	\item and design a robust STC scheme for perturbed uncertain nonlinear systems, which simultaneously achieves greater versatility and competitive performance, compared to the existing literature.
\end{itemize}


\section{Notation and Preliminaries}
\subsection{Notation}
We denote points in $\mathbb{R}^n$ as $x$ and their Euclidean norm as $|x|$. For vectors, we also use the notation $(x_1,x_2)=\begin{bmatrix}x_1^\top &x_2^\top\end{bmatrix}^\top$. Consider a set $\init\subseteq\real^n$. Then, $\overline{\init}$ denotes its closure, $\mathrm{int}(\init)$ its interior and $\mathrm{conv}(\init)$ its convex hull. Moreover, for any $\lambda\in\real$ we denote: $\lambda\init = \{\lambda x\in\real^n:\text{ }x\in\init\}$. 

Consider a system of ordinary differential equations (ODE):
\begin{equation} \label{ode}
\dot{\zeta}(t) = f(\zeta(t)),
\end{equation}
where $\zeta:\real\to\real^n$. We denote by $\zeta(t; t_0, \zeta_0)$ the solution of \eqref{ode} with initial condition $\zeta_0$ and initial time $t_0$. When $t_0$ (or $\zeta_0$) is clear from the context, then it is omitted, i.e. we write $\zeta(t;\zeta_0)$ (or $\zeta(t)$).

Consider the differential inclusion (DI):
\begin{equation} \label{di}
\dot{\zeta}(t) \in F(\zeta(t)),
\end{equation}
where $\zeta:\real\to\real^n$ and $F:\real^n\rightrightarrows\real^n$ is a set-valued map. In contrast to ODEs, which under mild assumptions obtain unique solutions given an initial condition, DIs generally obtain multiple solutions for each initial condition, which might even not be defined for all time. We denote by $\zeta(t;\zeta_0)$ any solution of \eqref{di} with initial condition $\zeta_0$. Moreover, $\mathcal{S}_F([0,T];\init)$ denotes the set of all solutions of \eqref{di} with initial conditions in $\init\subseteq\real^n$, which are defined on $[0,T]$. Thus, the \emph{reachable set} from $\init\subseteq\real^n$ of \eqref{di} at time $T\geq0$ is defined as:
\begin{equation*}
\reach^F_T(\init) = \{\xi(T;\xi_0):\text{ }\text{ }\xi(\cdot;\xi_0)\in\mathcal{S}_F([0,T];\init)\}.
\end{equation*} 
Likewise, the \emph{reachable flowpipe} from $\init\subseteq\real^n$ of \eqref{di} in the interval $[\tau_1,\tau_2]$ is $\reach^F_{[\tau_1,\tau_2]}(\init)=\bigcup\limits_{t\in[\tau_1,\tau_2]}\reach^F_t(\init)$.

\subsection{Homogeneous Systems and Differential Inclusions}
Here, we focus on the classical notion of homogeneity, with respect to the standard dilation. For the general definition and more information the reader is referred to \cite{bernuau2013homogeneous_diff_incl} and \cite{kawski_homogeneity}.
\begin{definition}[Homogeneous functions and set-valued maps]
	Consider a function $f:\real^n\to\real^m$ (or a set-valued map $F:\real^n\rightrightarrows\real^m$). We say that $f$ (or $F$) is homogeneous of degree $\alpha\in\real$, if for all $x\in\real^n$ and any $\lambda>0$: $f(\lambda x) = \lambda^{\alpha+1}f(x)$ (respectively $F(\lambda x) = \lambda^{\alpha+1}F(x)$).
\end{definition}
Correspondingly, a system of ODEs \eqref{ode} or a DI \eqref{di} is called homogeneous of degree $\alpha\in\real$ if the corresponding vector field or set-valued map is homogeneous of the same degree. For homogeneous ODEs or DIs, the following scaling property of solutions holds:
\begin{proposition}[Scaling Property \cite{kawski_homogeneity, bernuau2013homogeneous_diff_incl}]
	Let the system of ODEs \eqref{ode} be homogeneous of degree $\alpha\in\real$. Then, for any $\zeta_0\in\real^n$ and any $\lambda>0$:
	\begin{equation}\label{scaling_ode}
		\zeta(t;\lambda\zeta_0) = \lambda\zeta(\lambda^{\alpha}t;\zeta_0).
	\end{equation} 
	Now, let DI \eqref{di} be homogeneous of degree $\alpha\in\real$. Then, for any $\init\subseteq\real^n$ and any $\lambda>0$:	
	\begin{equation}\label{scaling_di}
		\reach^F_t(\lambda\init) = \lambda\reach^F_{\lambda^{\alpha}t}(\init).
	\end{equation}
\end{proposition}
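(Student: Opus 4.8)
The plan is to treat the two parts separately, using uniqueness of solutions for the ODE case and a solution-by-solution correspondence for the DI case, since the DI setting has no uniqueness to rely on.

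For the ODE \eqref{ode}, I would verify directly that the proposed right-hand side is a solution and then invoke uniqueness. Fix $\zeta_0$ and $\lambda>0$, and define $\eta(t):=\lambda\,\zeta(\lambda^{\alpha}t;\zeta_0)$. At $t=0$ we get $\eta(0)=\lambda\zeta_0$, so the initial condition matches. Differentiating and applying the chain rule,
\begin{equation*}
\dot\eta(t)=\lambda\cdot\lambda^{\alpha}\,\dot\zeta(\lambda^{\alpha}t;\zeta_0)=\lambda^{\alpha+1}f\bigl(\zeta(\lambda^{\alpha}t;\zeta_0)\bigr),
\end{equation*}
and the homogeneity identity $\lambda^{\alpha+1}f(x)=f(\lambda x)$ turns the right-hand side into $f\bigl(\lambda\zeta(\lambda^{\alpha}t;\zeta_0)\bigr)=f(\eta(t))$. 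Hence $\eta$ solves \eqref{ode} with initial state $\lambda\zeta_0$, and uniqueness yields $\eta(t)=\zeta(t;\lambda\zeta_0)$, which is exactly \eqref{scaling_ode}.

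For the DI \eqref{di}, I would instead establish a bijection between solution sets and then push it through to reachable sets. The key claim is that $\xi(\cdot)$ is a solution on $[0,\lambda^{\alpha}T]$ with $\xi(0)\in\init$ if and only if $\eta(t):=\lambda\,\xi(\lambda^{\alpha}t)$ is a solution on $[0,T]$ with $\eta(0)\in\lambda\init$. Since solutions are absolutely continuous and satisfy the inclusion for almost every $t$, the affine time-rescaling $t\mapsto\lambda^{\alpha}t$ preserves absolute continuity and, for a.e.\ $t$,
\begin{equation*}
\dot\eta(t)=\lambda^{\alpha+1}\dot\xi(\lambda^{\alpha}t)\in\lambda^{\alpha+1}F\bigl(\xi(\lambda^{\alpha}t)\bigr)=F\bigl(\lambda\xi(\lambda^{\alpha}t)\bigr)=F(\eta(t)),
\end{equation*}
where the middle equality is the homogeneity of $F$. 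The converse follows by the same computation with $\lambda$ replaced by $1/\lambda$, so $\xi\mapsto\eta$ is a bijection between $\mathcal{S}_F([0,\lambda^{\alpha}T];\init)$ and $\mathcal{S}_F([0,T];\lambda\init)$.

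Evaluating this correspondence at the endpoint gives the set equality: any point of $\reach^F_t(\lambda\init)$ is $\eta(t)=\lambda\,\xi(\lambda^{\alpha}t)$ for some solution $\xi$ from $\init$, and since $\xi(\lambda^{\alpha}t)\in\reach^F_{\lambda^{\alpha}t}(\init)$ we get $\eta(t)\in\lambda\reach^F_{\lambda^{\alpha}t}(\init)$; the reverse inclusion comes from the inverse map, proving \eqref{scaling_di}. I expect the main obstacle to be the DI part, specifically the bookkeeping that the time-rescaling sends solutions defined on $[0,T]$ to solutions defined on $[0,\lambda^{\alpha}T]$ while respecting the almost-everywhere nature of the differential inclusion; once the solution-set bijection is set up carefully, transporting it to the reachable sets is routine.
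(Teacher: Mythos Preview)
Your argument is correct and is essentially the standard proof of this classical result. Note, however, that the paper does not actually prove this proposition: it is stated in the preliminaries with attribution to \cite{kawski_homogeneity,bernuau2013homogeneous_diff_incl} and no proof is given, so there is no in-paper proof to compare against. Your write-up matches the standard route taken in those references (direct verification plus uniqueness for the ODE, and a solution-set correspondence via the affine time reparametrization for the DI).
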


\subsection{Event-Triggered Control Systems}
Consider the control system with state-feedback:
\begin{equation}\label{ct_sys}
	\dot{\zeta}(t) = f\Big(\zeta(t),\upsilon(\zeta(t))\Big),
\end{equation}
where $\zeta:\real\to\real^n$, $f:\real^n\times\real^{m_u}\to\real^n$, and $\upsilon:\real^n\to\real^{m_u}$ is the control input. In any sample-and-hold scheme, the control input is updated on sampling time instants $t_i$ and held constant between consecutive sampling times:
\begin{equation*}
	\dot{\zeta}(t) = f\Big(\zeta(t),\upsilon(\zeta(t_i))\Big), \quad t\in[t_i,t_{i+1}).
\end{equation*}
If we define the \textit{measurement error} as the difference between the last measurement and the present state: 
\begin{equation*}
	\varepsilon_{\zeta}(t):=\zeta(t_i)-\zeta(t), \quad t\in[t_i,t_{i+1}),
\end{equation*}
then the sample-and-hold system can be written as:
\begin{equation}\label{snh_sys}
\dot{\zeta}(t) = f\Big(\zeta(t),\upsilon(\zeta(t)+\varepsilon_{\zeta}(t))\Big), \quad t\in[t_i,t_{i+1}).
\end{equation}
Notice that the error $\varepsilon_{\zeta}(t)$ resets to zero at each sampling time. In ETC, the sampling times are determined by:
\begin{equation}\label{trig_cond}
	t_{i+1} = t_i + \inf\{t>0:\text{ }\phi(\zeta(t;x_i),\varepsilon_{\zeta}(t))\geq 0\},
\end{equation}
and $t_0=0$, where $x_i\in\real^n$ is the previously sampled state, $\phi(\cdot,\cdot)$ is the \textit{triggering function}, \eqref{trig_cond} is the \textit{triggering condition} and $t_{i+1}-t_i$ is called \textit{inter-sampling time}. Each point $x\in\real^n$ corresponds to a specific inter-sampling time, defined as:
\begin{equation}\label{intersampling_time}
	\tau(x):=\inf\{t>0:\text{ }\phi(\zeta(t;x),\varepsilon_{\zeta}(t))\geq 0\}.
\end{equation}

Finally, since $\dot{\varepsilon}_{\zeta}(t)=-\dot{\zeta}(t)$, we can write the dynamics of the extended ETC closed loop in a compact form:
\begin{equation} \label{etc_system}
\begin{aligned}
&\dot{\xi}(t)= \begin{bmatrix} f\Big(\zeta(t),\upsilon(\zeta(t)+\varepsilon_{\zeta}(t))\Big)\\
-f\Big(\zeta(t),\upsilon(\zeta(t)+\varepsilon_{\zeta}(t))\Big) \end{bmatrix}=f_e(\xi(t)), \text{ } t \in [t_i, t_{i+1})\\
&\xi(t_{i+1}^+)=\begin{bmatrix}
\zeta(t_{i+1}^-)\\ 0
\end{bmatrix},
\end{aligned}
\end{equation}
where $\xi = (\zeta,\varepsilon_{\zeta})\in\real^{2n}$. At each sampling time $t_i$, the state of \eqref{etc_system} becomes $\xi_i=(x_i,0)$. Thus, since we are interested in intervals between consecutive sampling times, instead of writing $\phi\Big(\xi(t;(x_i,0))\Big)$ (or $\tau\Big((x_i,0)\Big)$), we abusively write $\phi(\xi(t;x_i))$ (or $\tau(x_i)$) for convenience. Between two consecutive sampling times, the triggering function starts from a negative value $\phi(\xi(t_i;x_i))<0$, and stays negative until $t_{i+1}^-$, when it becomes zero.  Triggering functions are designed such that the inequality $\phi(\xi(t;x_i))\leq0$ implies certain performance guarantees (e.g. stability). Thus, sampling times are defined in a way (see \eqref{trig_cond}) such that $\phi(\xi(t))\leq0$ for all $t\geq0$, which implies that the performance specifications are met at all time.

\subsection{Self-Triggered Control: Emulation Approach}
The emulation approach to STC entails providing conservative estimates of a corresponding ETC scheme's inter-sampling times, based solely on the present measurement $x_i$: 
\begin{equation}\label{stc bounds etc}
	\tau^{\downarrow}(x_i)\leq\tau(x_i),
\end{equation} 
where $\tau^{\downarrow}(\cdot)$ denotes STC inter-sampling times. This guarantees that the triggering function of the emulated ETC remains negative at all time, i.e. STC provides the same guarantees as the emulated ETC. Thus, STC inter-sampling times should be no larger than ETC ones, but as large as possible in order to reduce resource utilization. Finally, infinitely fast sampling (Zeno phenomenon) should be avoided, i.e. $\inf\limits_{x}\tau^{\downarrow}(x)\geq\epsilon>0$.

\section{Problem Statement}
In \cite{delimpaltadakis_tac}, for a system \eqref{etc_system}, given a triggering function $\phi(\cdot)$ and a finite set of arbitrary user-defined times $\{\tau_1,\tau_2,\dots,\tau_q\}$ (where $\tau_i<\tau_{i+1}$), which serve as STC inter-sampling times, the state-space of the original system \eqref{snh_sys} is partitioned into regions $\reg_i\subset\real^n$ such that:
\begin{equation} \label{regions-times}
\forall x \in \mathcal{R}_i:\quad \tau_i\leq\tau(x),
\end{equation}
where $\tau(x)$ denotes ETC inter-sampling times corresponding to the given triggering function $\phi(\cdot)$. The region-based STC protocol operates as follows:
\begin{enumerate}
	\item Measure the current state $\xi(t_k)=(x_k,0)$.
	\item Check to which of the regions $\mathcal{R}_i$ the point $x_k$ belongs.
	\item If $x_k\in\mathcal{R}_i$, set the next sampling time to $t_{k+1} = t_k + \tau_i$.
\end{enumerate}

As mentioned in the introduction, we aim at extending the STC technique of \cite{delimpaltadakis_tac} to systems with disturbances and uncertainties. Thus, we consider perturbed/uncertain ETC systems, written in the compact form:
\begin{equation}\label{perturbed_etc_sys}
	\dot{\xi}(t)= \begin{bmatrix} f\Big(\zeta(t),\upsilon(\zeta(t)+\varepsilon_{\zeta}(t)), d(t)\Big)\\
	-f\Big(\zeta(t),\upsilon(\zeta(t)+\varepsilon_{\zeta}(t)), d(t)\Big) \end{bmatrix}=f_e(\xi(t),d(t)),
\end{equation}
where $d:\real\to\real^{m_d}$ is an unknown signal (e.g. disturbance, model uncertainty, etc.), and assume that a triggering function $\phi(\xi(t))$ is given. 
\begin{assumption}\label{assum1}
	For the remainder of the article we assume the following:
	\begin{enumerate}
		\item The function $f_e(\cdot,\cdot)$ is locally bounded and continuous with respect to all of its arguments. \label{bounded_fe}
		\item For all $t\geq 0$: $d(t)\in\Delta$, where $\Delta\subset\real^{m_d}$ is convex, compact and non-empty. \label{boundedness_assumption}
		\item The function $\phi(\cdot)$ is continuously differentiable.\label{trig_fun_differentiability}
		\item For all $\xi_0=(x_0,0)\in\real^{2n}\mathrm{:}$ $\phi(\xi_0)<0$. Moreover, for any compact set $K\subset\real^n$ there exists $\epsilon_K>0$ such that for all $x_0\in K$ and any $d(t)$, with $d(t)\in\Delta$ for all $t\geq0$, $\phi(\xi(t;\xi_0))<0$ for all $t\in[0,\epsilon_K)$. \label{trig_fun_negativity}
	\end{enumerate}
\end{assumption}
The problem statement of this work is:
\begin{problem statement}
	Given a system \eqref{perturbed_etc_sys} and a triggering function $\phi(\cdot)$, which satisfy Assumption \ref{assum1}, and a predefined finite set of times $\{\tau_1\dots,\tau_q\}$ (with $\tau_i<\tau_{i+1}$), derive regions $\reg_i\subset\real^n$ that satisfy \eqref{regions-times}.
\end{problem statement}
\begin{remark}
	In region-based STC, the Zeno phenomenon is ruled out by construction, since region-based STC inter-sampling times are lower-bounded: $\tau^\downarrow(x)\geq\min\limits_{i}\tau_i=\tau_1$.
\end{remark}
Items \ref{bounded_fe} and \ref{boundedness_assumption} of Assumption \ref{assum1} impose the satisfaction of the standard assumptions of differential inclusions on the DIs that we construct later (see \eqref{diff_incl_1}). These assumptions ensure existence of solutions for all initial conditions (see \cite{bernuau2013homogeneous_diff_incl} and \cite{filippov2013differential} for more details). Note that assuming convexity of $\Delta$ is not restrictive, since in the case of a non-convex $\Delta$ we can consider the closure of its convex hull and write $d(t)\in\overline{\mathrm{conv}(\Delta)}$ for all $t\geq0$. Finally, item \ref{trig_fun_differentiability} is employed in the proof of Lemma \ref{bounding_lemma}, while item \ref{trig_fun_negativity} ensures that the emulated ETC associated to the given triggering function does not exhibit Zeno behaviour for any given bounded state.
\begin{remark}\label{trig_functions_remark}
	The triggering function should be chosen to be robust to disturbances/uncertainties, such that the emulated ETC does not exhibit Zeno behaviour. 
	Examples of such robust triggering functions are: 
	\begin{itemize}
		\item Lebesgue sampling (e.g. \cite{arzen1999,tiberi_stc_perturbed}): $\phi(\xi(t)) = |\varepsilon_{\zeta}(t)|^2-\epsilon^2$, where $\epsilon>0$. 
		\item Mixed-Triggering (e.g. \cite{small_gain_robust_etc}): $\phi(\xi(t)) = |\varepsilon_{\zeta}(t)|^2-\sigma|\zeta(t)|^2-\epsilon^2$, where $\sigma>0$ is appropriately chosen and $\epsilon>0$.
	\end{itemize}
	Both functions satisfy Assumption \ref{assum1}.
\end{remark}

\section{Overview of \cite{delimpaltadakis_tac}} \label{previous_work_section}
In this section, we give a brief overview of \cite{delimpaltadakis_tac}. First, we focus on homogeneous ETC systems and how, exploiting approximations of their isochronous manifolds, a state-space partitioning into regions $\reg_i$ can be derived. Afterwards, we recall how these results can be generalized to general nonlinear systems, by employing a homogenization procedure.

\subsection{Homogeneous ETC Systems, Isochronous Manifolds and State-Space Partitioning}
\begin{figure*}[h!]
	\centering
	\begin{subfigure}[t]{0.31\textwidth}
		\centering
		\includegraphics[width=0.7\textwidth]{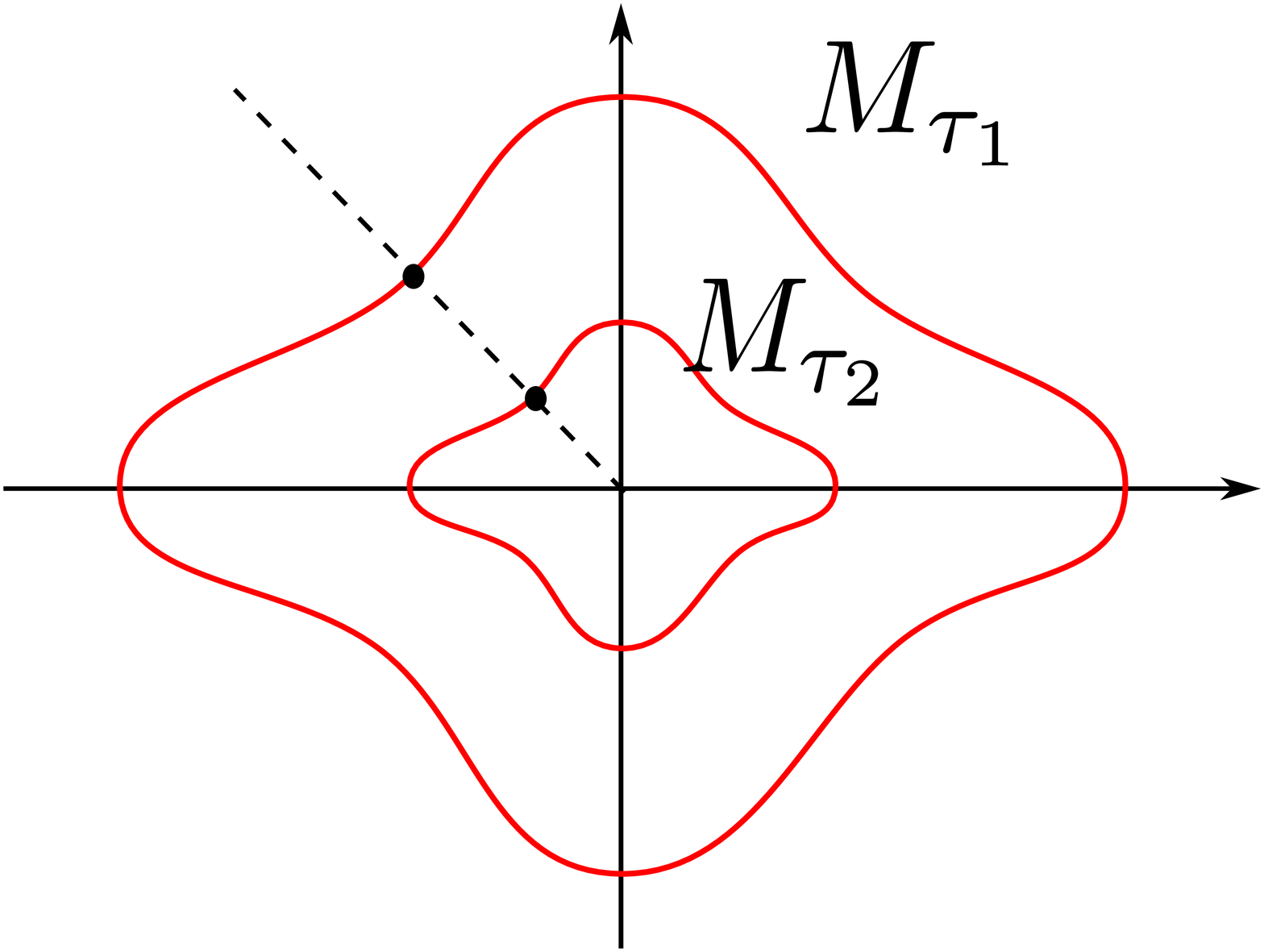}
		\caption{Isochronous manifolds of a homogeneous ETC system for times $\tau_1<\tau_2$.}
		\label{two_manifolds_fig}
	\end{subfigure}\quad
	\begin{subfigure}[t]{0.31\textwidth}
		\centering
		\includegraphics[width=0.8\textwidth]{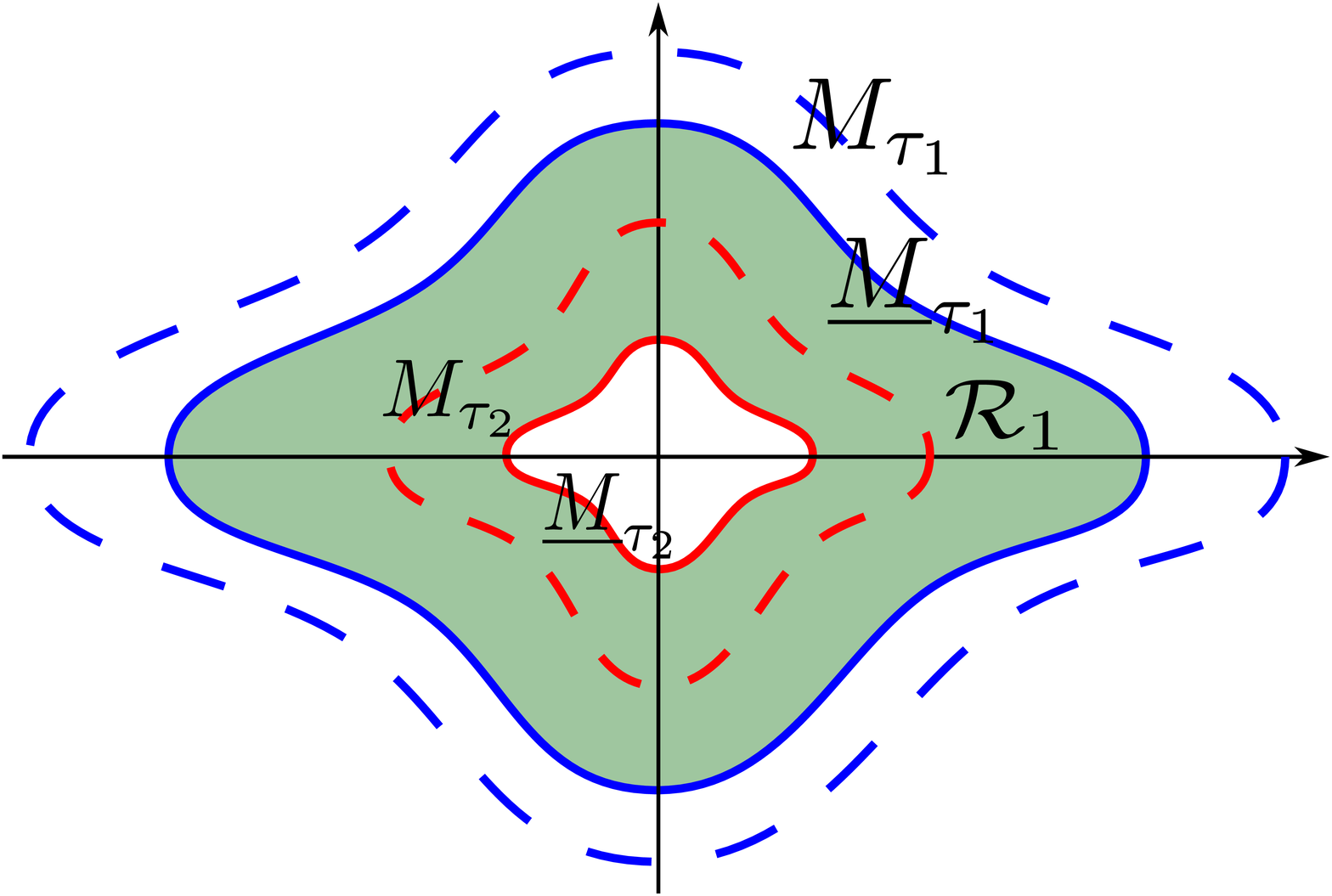}
		\caption{Isochronous manifolds $M_{\tau_i}$ (dashed lines), and inner-approximations $\underline{M}_{\tau_i}$ (solid lines). The filled region represents $\reg_1$.}
		\label{discr_approx}
	\end{subfigure}\quad
	\begin{subfigure}[t]{0.31\textwidth}
		\centering
		\includegraphics[width=0.7\textwidth]{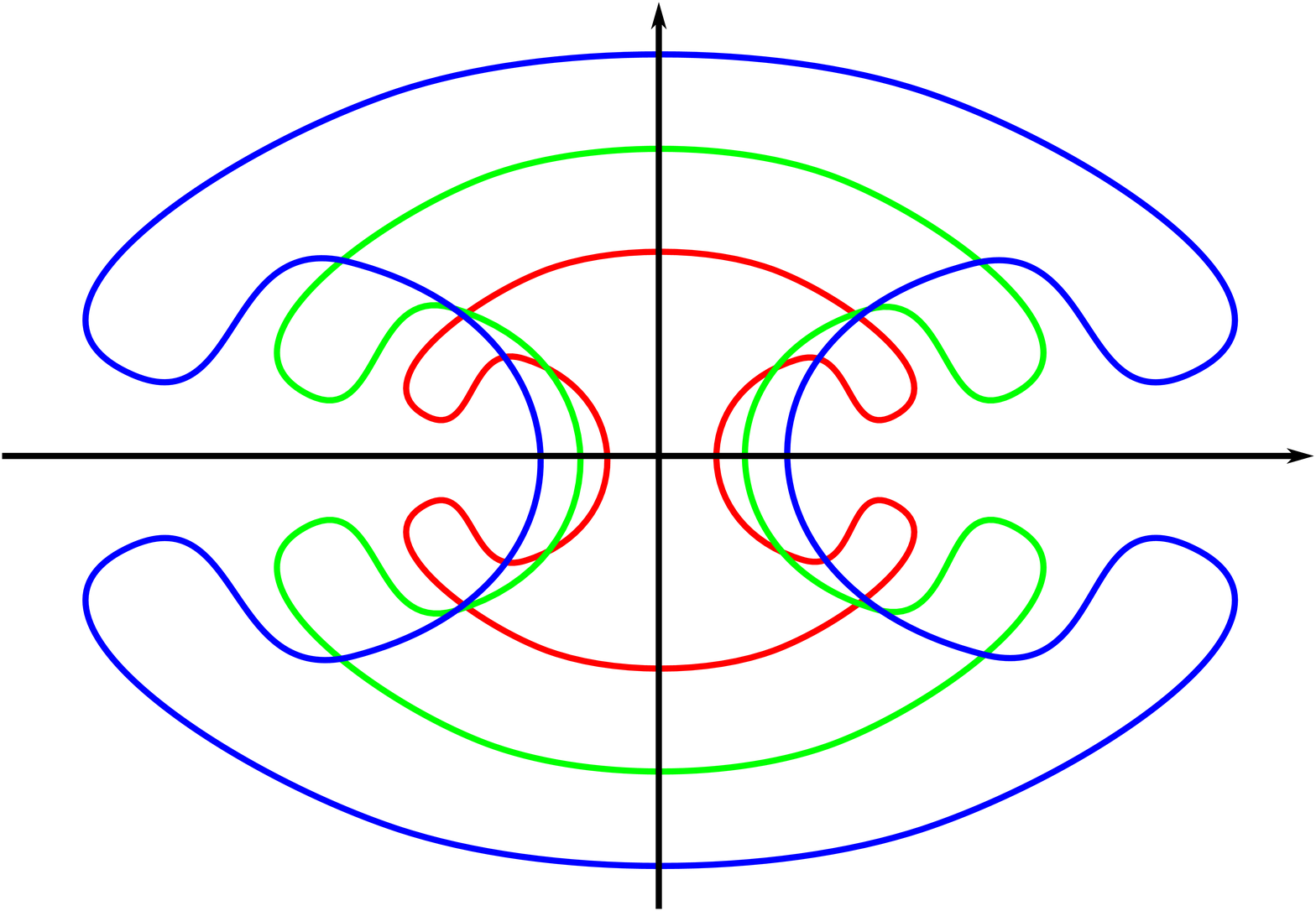}
		\caption{If inner-approximations of isochronous manifolds did not satisfy properties \ref{prop3}-\ref{prop4} from Proposition \ref{manifold properties}, then the regions $\reg_i$ could intersect with each other.}
		\label{bad_discr}
	\end{subfigure}
	\caption{Isochronous manifolds and inner-approximations.}
\end{figure*}
An important property of homogeneous ETC systems is the scaling of inter-sampling times:
\begin{theorem}[Scaling of ETC Inter-Sampling Times \cite{tosample}]
	Consider an ETC system \eqref{etc_system} and a triggering function, homogeneous of degree $\alpha$ and $\theta$ respectively. Then, for all $x\in\real^n$ and $\lambda>0$:
	\begin{equation}\label{intersampling_scaling}
	\tau(\lambda x) = \lambda^{-\alpha}\tau(x),
	\end{equation}
	where $\tau(\cdot)$ is defined in \eqref{intersampling_time}.
\end{theorem}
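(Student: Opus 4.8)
The plan is to reduce the claim to the two scaling facts already available: the scaling property of solutions of homogeneous ODEs, \eqref{scaling_ode}, applied to the extended closed loop \eqref{etc_system}, and the defining homogeneity identity of the triggering function. Throughout I would treat the inter-sampling time as the infimum in \eqref{intersampling_time}, with the convention $\inf\emptyset=+\infty$, so that the degenerate case $\tau(x)=+\infty$ is covered automatically.

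First I would fix $x\in\real^n$ and $\lambda>0$ and write out $\tau(\lambda x)=\inf\{t>0:\phi(\xi(t;\lambda x))\geq0\}$, where $\xi(\cdot;\lambda x)$ denotes the extended flow started at $(\lambda x,0)$. Since scaling the initial condition $(x,0)$ by $\lambda$ produces exactly $(\lambda x,0)$ (the zero error block is invariant under the dilation), the hypothesis that \eqref{etc_system} is homogeneous of degree $\alpha$ lets me invoke \eqref{scaling_ode} directly on $\xi$, giving $\xi(t;\lambda x)=\lambda\,\xi(\lambda^{\alpha}t;x)$ for every $t$ in the common interval of existence.

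Next I would substitute this into the triggering function and use that $\phi$ is homogeneous of degree $\theta$, i.e. $\phi(\lambda\eta)=\lambda^{\theta+1}\phi(\eta)$ for all $\eta$ and $\lambda>0$. This yields
$$\phi(\xi(t;\lambda x)) = \phi\big(\lambda\,\xi(\lambda^{\alpha}t;x)\big) = \lambda^{\theta+1}\,\phi(\xi(\lambda^{\alpha}t;x)).$$
The crucial observation is that $\lambda^{\theta+1}>0$, so multiplication by it preserves sign; hence the event $\phi(\xi(t;\lambda x))\geq0$ holds if and only if $\phi(\xi(\lambda^{\alpha}t;x))\geq0$. Consequently the set over which the infimum defining $\tau(\lambda x)$ is taken is precisely $\{t>0:\phi(\xi(\lambda^{\alpha}t;x))\geq0\}$.

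Finally I would perform the change of variable $s=\lambda^{\alpha}t$. Because $\lambda^{\alpha}>0$, the map $t\mapsto\lambda^{\alpha}t$ is an increasing bijection of $(0,\infty)$ onto itself, so it carries the constraint set bijectively and scales the infimum: $\tau(\lambda x)=\inf\{\lambda^{-\alpha}s:s>0,\ \phi(\xi(s;x))\geq0\}=\lambda^{-\alpha}\inf\{s>0:\phi(\xi(s;x))\geq0\}=\lambda^{-\alpha}\tau(x)$, which is the desired identity. I do not expect a genuine obstacle here; the only points demanding care are verifying that the dilation fixes the error coordinate so that \eqref{scaling_ode} applies to the extended state, and the sign-preservation step, where homogeneity of $\phi$ of \emph{any} degree $\theta$ suffices precisely because $\lambda^{\theta+1}$ is strictly positive for $\lambda>0$.
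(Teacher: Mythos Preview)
Your proof is correct and follows essentially the same route as the paper: the paper derives \eqref{trig_function_scaling} directly from \eqref{scaling_ode} and the homogeneity of $\phi$, and then (as made explicit in the proof of Theorem~\ref{di_intersampling_scaling_theorem} for the DI case) performs exactly the change of variable $s=\lambda^{\alpha}t$ inside the infimum that you describe. Your write-up is in fact more careful about the edge cases (the $\inf\emptyset=+\infty$ convention, the error coordinate being fixed by the dilation, and the sign-preservation step), but the underlying argument is identical.
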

Thus, for homogeneous ETC systems with degree $\alpha>0$, along a ray that starts from the origin (\textit{homogeneous ray}), inter-sampling times become larger for points closer to the origin. Scaling law \eqref{intersampling_scaling} is a direct consequence of the system's and triggering function's homogeneity, since \eqref{scaling_ode} implies that:
\begin{equation}\label{trig_function_scaling}
	\phi(\xi(t;\lambda x))=\phi(\lambda\xi(\lambda^\alpha t;x))=\lambda^{\theta+1}\phi(\xi(\lambda^\alpha t;x)).
\end{equation}

In \cite{delimpaltadakis_tac}, the scaling law \eqref{intersampling_scaling} is combined with inner-approximations of \textit{isochronous manifolds}, a notion firstly introduced in \cite{anta2012isochrony}. Isochronous manifolds are sets of points with the same inter-sampling time:
\begin{definition}[Isochronous Manifolds]
	Consider an ETC system \eqref{etc_system}. The set $M_{\tau_{\star}}=\{x\in\mathbb{R}^n : \tau(x)=\tau_{\star}\}$, where $\tau(x)$ is defined in \eqref{intersampling_time}, is called isochronous manifold of time $\tau_{\star}$.
	\label{manifold definition}
\end{definition}
For homogeneous systems, the scaling law \eqref{intersampling_scaling} implies that isochronous manifolds satisfy the following properties:
\begin{proposition}[\hspace{1sp}\cite{delimpaltadakis_tac,anta2012isochrony}]\label{manifold properties}
	Consider an ETC system \eqref{etc_system} and a triggering function, homogeneous of degree $\alpha>0$ and $\theta>0$ respectively, and let Assumption \ref{assum1} hold. Then:
	\begin{enumerate}
		\item For any time $\tau_{\star}>0$, there exists an isochronous manifold $M_{\tau_{\star}}$. \label{prop1}
		\item Isochronous manifolds are hypersurfaces of dimension $n-1$.\label{prop2}
		\item Each homogeneous ray intersects an isochronous manifold $M_{\tau_{\star}}$ only at one point.\label{prop3}
		\item \label{prop4}Given two isochronous manifolds $M_{\tau_1}$, $M_{\tau_2}$ with $\tau_1<\tau_2$, on every homogeneous ray $M_{\tau_1}$ is further away from the origin compared to $M_{\tau_2}$, i.e. for all $x\in M_{\tau_1}$:
		\begin{equation*}
		\begin{aligned}
			&\bullet\exists! \lambda_x\in(0,1) \text{ s.t. } \lambda_x x \in M_{\tau_{2}},\\
			&\bullet\not\exists \kappa_x\geq 1 \text{ s.t. } \kappa_x x \in M_{\tau_{2}}.
		\end{aligned}	
		\end{equation*}
	\end{enumerate}
\end{proposition}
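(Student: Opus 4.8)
The plan is to derive all four properties from the scaling law \eqref{intersampling_scaling} by reducing each to the behaviour of $\tau$ along a single homogeneous ray. The central object is the restriction of the inter-sampling time to a fixed direction: for $v\in\real^n$ with $|v|=1$, set $g_v(\lambda):=\tau(\lambda v)$ for $\lambda>0$. By \eqref{intersampling_scaling}, $g_v(\lambda)=\lambda^{-\alpha}\tau(v)$, so the entire analysis hinges on two facts about the value $\tau(v)$ at the unit-sphere point: that it is strictly positive and finite. Positivity follows from item \ref{trig_fun_negativity} of Assumption \ref{assum1} applied to the compact set $K=S^{n-1}$, which yields a uniform $\tau(v)\geq\epsilon>0$. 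Finiteness --- that $\phi$ does reach zero in finite time along every ray --- is what I expect to be the main obstacle, since it is not implied by scaling alone; I would argue it from the design of $\phi$ (as in Remark \ref{trig_functions_remark}, robust triggering functions are built precisely so that $\phi$ eventually becomes nonnegative), noting that by \eqref{intersampling_scaling} it suffices to establish it on the compact sphere.

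Granting $\tau(v)\in(0,\infty)$, the map $\lambda\mapsto g_v(\lambda)=\lambda^{-\alpha}\tau(v)$ is, for $\alpha>0$, a strictly decreasing continuous bijection of $(0,\infty)$ onto $(0,\infty)$. This single observation settles Properties \ref{prop1}, \ref{prop3} and \ref{prop4} at once. For \ref{prop3}, strict monotonicity forces $g_v(\lambda)=\tau_\star$ to have at most one solution and surjectivity forces at least one, so each ray meets $M_{\tau_\star}$ in exactly one point, located at $\lambda_v=\big(\tau(v)/\tau_\star\big)^{1/\alpha}$. For \ref{prop1}, collecting these points over all $v\in S^{n-1}$ exhibits a nonempty $M_{\tau_\star}$. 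For \ref{prop4}, if $x\in M_{\tau_1}$ then $\tau(x)=\tau_1$, and solving $\tau(\lambda x)=\lambda^{-\alpha}\tau_1=\tau_2$ gives the unique factor $\lambda_x=\big(\tau_1/\tau_2\big)^{1/\alpha}$, which lies in $(0,1)$ precisely because $\tau_1<\tau_2$ and $\alpha>0$; moreover $\tau(\kappa x)=\kappa^{-\alpha}\tau_1\leq\tau_1<\tau_2$ for every $\kappa\geq1$, ruling out any intersection of the ray with $M_{\tau_2}$ at or beyond $x$.

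For Property \ref{prop2} I would promote this ray-by-ray description to a global parametrization. Define $\Psi:S^{n-1}\to M_{\tau_\star}$ by $\Psi(v)=h(v)\,v$ with $h(v):=\big(\tau(v)/\tau_\star\big)^{1/\alpha}$, which by \ref{prop3} is a bijection onto $M_{\tau_\star}$. To conclude that $M_{\tau_\star}$ is an $(n-1)$-dimensional hypersurface, it suffices that $v\mapsto\tau(v)$ be smooth, so that $h$ is smooth and positive and $\Psi$ is a radial graph over the sphere, hence a smooth injective immersion; as $S^{n-1}$ is compact, such a map is an embedding, making $M_{\tau_\star}$ diffeomorphic to $S^{n-1}$ and of dimension $n-1$. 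Smoothness of $\tau$ I would obtain from the implicit function theorem applied to $G(\lambda,x):=\phi(\xi(\lambda;x))=0$: item \ref{trig_fun_differentiability} of Assumption \ref{assum1} gives $C^1$ dependence of $G$, and at the first-crossing time the transversality condition $\tfrac{\partial G}{\partial\lambda}=\dot{\phi}(\xi(\tau(x);x))>0$ makes $\tau$ a $C^1$ function of the state near $M_{\tau_\star}$.

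The hard part, as flagged, is not the scaling bookkeeping but the two qualitative inputs on $\phi$ along trajectories: finiteness of $\tau$ (equivalently, that every ray actually triggers) and the strict crossing $\dot{\phi}>0$ required for the implicit function theorem. Neither follows from homogeneity; both express that $\phi$ is a legitimate, Zeno-free triggering rule in the sense of Assumption \ref{assum1}. Once these are in hand, Properties \ref{prop1}--\ref{prop4} reduce entirely to the monotonicity and surjectivity of $g_v(\lambda)=\lambda^{-\alpha}\tau(v)$.
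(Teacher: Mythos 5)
Your reduction of Properties \ref{prop1}, \ref{prop3} and \ref{prop4} to the monotonicity of $g_v(\lambda)=\lambda^{-\alpha}\tau(v)$ is exactly the mechanism this paper attributes the proposition to (the paper gives no proof of its own, only the citation and the remark that the scaling law \eqref{intersampling_scaling} implies these properties), and your argument for Property \ref{prop4} is complete as written, since for $x\in M_{\tau_1}$ finiteness of $\tau$ along that ray is automatic. The problem lies in the two ``qualitative inputs'' you flag and then wave through: under this paper's Assumption \ref{assum1} they are simply not available, and you never actually prove them. Concerning finiteness of $\tau(v)$: Assumption \ref{assum1} only rules out \emph{immediate} triggering (item \ref{trig_fun_negativity}); nothing forces $\phi$ ever to reach zero. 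Concretely, if some $x_0\neq 0$ is an equilibrium of the sampled-data loop, i.e. $f(x_0,\upsilon(x_0))=0$ --- which homogeneity does not exclude, since the equilibrium set of a homogeneous vector field is a cone --- then $\zeta(t)\equiv x_0$, $\varepsilon_\zeta(t)\equiv 0$, hence $\phi(\xi(t;x_0))=\phi((x_0,0))<0$ for all $t$ and $\tau(x_0)=\infty$; by scaling, the entire ray through $x_0$ never triggers. Then $M_{\tau_\star}$ misses that ray (so Property \ref{prop3} fails in its ``exactly one'' reading), and if this happens on every ray, $M_{\tau_\star}$ is empty and Property \ref{prop1} fails. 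Finiteness is therefore not a detail to be outsourced to ``the design of $\phi$'': it is an additional hypothesis, or a fact that must be proven for the specific triggering function at hand.

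The second gap is in Property \ref{prop2}. Your implicit-function-theorem route needs both the transversality condition $\dot\phi(\xi(\tau(x);x))>0$ and $C^1$ dependence of the flow on initial conditions. Neither follows from Assumption \ref{assum1}: at a first zero crossing one only gets $\dot\phi\geq 0$ (the trajectory may graze the zero level set tangentially, which can make $\tau(\cdot)$ discontinuous, not merely non-smooth, destroying the radial-graph embedding argument), and item \ref{bounded_fe} gives only \emph{continuity} of $f_e$, which is not enough for differentiability --- or even uniqueness --- of $x\mapsto\xi(t;x)$. Your closing claim that these properties ``express that $\phi$ is a legitimate, Zeno-free triggering rule in the sense of Assumption \ref{assum1}'' is incorrect: that assumption is a local-negativity/no-Zeno condition and says nothing about eventual or transversal crossing. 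In sum, the proposal establishes Property \ref{prop4} fully and Property \ref{prop3} in the ``at most one intersection'' reading, but Properties \ref{prop1}, \ref{prop2} and the existence half of \ref{prop3} rest on hypotheses that are neither proven nor implied by the stated assumptions.
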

Properties \ref{prop2}-\ref{prop4} from Proposition \ref{manifold properties} are illustrated in Fig. \ref{two_manifolds_fig}.
Now, consider the region $R_1$ between isochronous manifolds $M_{\tau_1}$ and $M_{\tau_2}$ in Fig. \ref{two_manifolds_fig}. The scaling law $\eqref{intersampling_scaling}$ directly implies that for all $x\in R_1$: $\tau_1\leq\tau(x)$, i.e. \eqref{regions-times} is satisfied.
Thus, if isochronous manifolds could be computed, then the state-space could be partitioned into the regions delimited by isochronous manifolds and the region-based STC scheme would be enabled.
\subsection{Inner-Approximations of Isochronous Manifolds}\label{inner_approx_section}
Since isochronous manifolds cannot be computed analytically, in \cite{delimpaltadakis_tac} inner-approximations $\underline{M}_{\tau_i}$ of isochronous manifolds $M_{\tau_i}$ are derived in an analytic form (see Fig. \ref{discr_approx}). Again due to the scaling law, for the region $\reg_1$ between two inner-approximations $\underline{M}_{\tau_1}$ and $\underline{M}_{\tau_2}$ (with $\tau_1<\tau_2$) it holds that $\tau_1\leq\tau(x)$ for all $x\in\reg_1$. Hence, given a set of times $\{\tau_1,\dots,\tau_q\}$, the state-space is partitioned into regions $\reg_i$ delimited by these inner-approximations. As noted in \cite{delimpaltadakis_tac}, it is crucial that approximations $\underline{M}_{\tau_i}$ have to satisfy the same properties as isochronous manifolds, mentioned in Proposition \ref{manifold properties}. For example, 
if approximations $\underline{M}_{\tau_i}$ did not satisfy properties \ref{prop3}-\ref{prop4}, then $\reg_i$ could potentially intersect with each other and be ill-defined (see Fig. \ref{bad_discr}).

To derive the inner-approximations, the triggering function $\phi(\xi(t;x))$ is upper-bounded by a function $\mu(x,t)$ with linear dynamics, that satisfies certain conditions. Then, the sets $\underline{M}_{\tau_i}=\{x\in\real^n:\text{ }\mu(x,\tau_i)=0\}$ are proven to be inner-approximations of isochronous manifolds $M_{\tau_i}$. The sufficient conditions that $\mu(x,t)$ has to satisfy in order for its zero-level sets to be inner-approximations of isochronous manifolds and satisfy the properties mentioned in Proposition \ref{manifold properties} are summarized in the following theorem:
\begin{theorem}[\hspace{1sp}\cite{delimpaltadakis_tac}] \label{theorem1_tac}
	Consider an ETC system \eqref{etc_system} and a triggering function $\phi(\cdot)$, homogeneous of degrees $\alpha>0$ and $\theta>0$, respectively. Let $\mu:\mathbb{R}^n\times\mathbb{R}^+\rightarrow\mathbb{R}$ be a function that satisfies:
	\begin{subequations}
		\begin{align}
		&\mu(x,0)<0, \quad \forall x \in \mathbb{R}^n\setminus\{0\}, \label{bound init cond}\\
		&\mu(x,t) \geq \phi(\xi(t;x)), \quad \forall t \in [0,\tau(x)] \text{ }\mathrm{and} \text{ }\forall x \in \mathbb{R}^n\setminus\{0\}, \label{bound time validity}\\
		&\mu(\lambda x,t) = \lambda^{\theta+1}\mu(x,\lambda^{\alpha}t), \quad \forall t,\lambda>0 \text{ }\mathrm{and}\text{ } \forall x \in \mathbb{R}^n\setminus\{0\}, \label{scaling of bound}\\
		&\forall x \in \mathbb{R}^n\setminus\{0\}: \quad \exists!\tau_x>0  \text{ such that } \mu(x,\tau_x)=0. \label{one-zero-crossing of bound}
		\end{align}
		\label{bound requirements}
	\end{subequations}
	The sets $\underline{M}_{\tau_{\star}} := \{x\in\mathbb{R}^n: \mu(x,\tau_{\star})=0\}$ are inner-approximations of isochronous manifolds $M_{\tau_{\star}}$. Moreover, the sets $\underline{M}_{\tau_{\star}}$ satisfy the properties mentioned in Proposition \ref{manifold properties}.
\end{theorem}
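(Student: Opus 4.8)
The plan is to reduce the whole statement to properties of a single auxiliary object: the zero-crossing time of $\mu$. By \eqref{one-zero-crossing of bound}, each $x\neq 0$ admits a unique $\tau_x>0$ with $\mu(x,\tau_x)=0$; regarding $\tau_x$ as a map $x\mapsto\tau_x$, the set in question is simply its level set, $\underline{M}_{\tau_\star}=\{x\neq0:\tau_x=\tau_\star\}$. Everything then hinges on two observations: (i) $\tau_x\leq\tau(x)$, which is what renders $\underline{M}_{\tau_\star}$ an \emph{inner}-approximation; and (ii) $\tau_x$ obeys the same scaling law as the genuine inter-sampling time $\tau(x)$, so that its level sets inherit, by verbatim repetition of the argument behind Proposition \ref{manifold properties}, the four properties listed there.

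For (i) I would argue as follows. We may assume $\tau(x)<\infty$, the claim being trivial otherwise. From the definition \eqref{intersampling_time} together with the continuity of $\phi$ (Assumption \ref{assum1}), $\phi(\xi(t;x))<0$ on $[0,\tau(x))$ and $\phi(\xi(\tau(x);x))=0$. Evaluating \eqref{bound time validity} at $t=\tau(x)$ yields $\mu(x,\tau(x))\geq\phi(\xi(\tau(x);x))=0$, while \eqref{bound init cond} gives $\mu(x,0)<0$. Since $t\mapsto\mu(x,t)$ is continuous (by construction $\mu$ has linear dynamics in $t$), the intermediate value theorem produces a zero of $\mu(x,\cdot)$ inside $(0,\tau(x)]$, and by the uniqueness in \eqref{one-zero-crossing of bound} this zero is exactly $\tau_x$; hence $\tau_x\leq\tau(x)$. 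Consequently every $x\in\underline{M}_{\tau_\star}$ satisfies $\tau(x)\geq\tau_x=\tau_\star$, meaning that on each homogeneous ray $\underline{M}_{\tau_\star}$ lies no farther from the origin than $M_{\tau_\star}$ --- the inner-approximation property.

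For (ii) the scaling law can be read directly off \eqref{scaling of bound}. Since $\lambda^{\theta+1}>0$ for $\lambda>0$, the identity $\mu(\lambda x,t)=\lambda^{\theta+1}\mu(x,\lambda^{\alpha}t)$ shows that $\mu(\lambda x,t)=0$ iff $\mu(x,\lambda^{\alpha}t)=0$, i.e. iff $\lambda^{\alpha}t=\tau_x$; therefore $\tau_{\lambda x}=\lambda^{-\alpha}\tau_x$, which is exactly \eqref{intersampling_scaling} with $\tau_x$ in place of $\tau(x)$. From this, properties \ref{prop1}, \ref{prop3} and \ref{prop4} follow by the same elementary computations used for $M_{\tau_\star}$: along the ray through any $x_0\neq0$, $\lambda\mapsto\lambda^{-\alpha}\tau_{x_0}$ sweeps all of $(0,\infty)$ (using $\alpha>0$), so $\underline{M}_{\tau_\star}$ is nonempty and is met exactly once, at $\lambda=(\tau_{x_0}/\tau_\star)^{1/\alpha}$; and for $\tau_1<\tau_2$ the unique point of $\underline{M}_{\tau_2}$ on the ray through $x\in\underline{M}_{\tau_1}$ is $\lambda_x x$ with $\lambda_x=(\tau_1/\tau_2)^{1/\alpha}\in(0,1)$, with no crossing at any $\kappa_x\geq1$.

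The step I expect to demand the most care is property \ref{prop2}, that $\underline{M}_{\tau_\star}$ is an $(n-1)$-dimensional hypersurface. Rather than invoking a regular-value argument --- which would oblige me to verify $\nabla_x\mu\neq0$ on the zero set --- I would use the radial parametrization already available: by property \ref{prop3}, the map $u\mapsto(\tau_u/\tau_\star)^{1/\alpha}\,u$ is a bijection from the unit sphere $S^{n-1}$ onto $\underline{M}_{\tau_\star}$. Provided $u\mapsto\tau_u$ is continuous, this map is a homeomorphism, so $\underline{M}_{\tau_\star}$ is a topological $(n-1)$-sphere and, in particular, a hypersurface of dimension $n-1$. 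The real obstacle is therefore establishing the continuity (or whatever regularity Proposition \ref{manifold properties} actually requires) of $x\mapsto\tau_x$; this is the one place where the finer structure of $\mu$ --- joint continuity in $(x,t)$ and transversality of its unique, sign-changing zero-crossing, both secured by the linear-dynamics construction --- has to be used, as opposed to the purely algebraic consequences of conditions \eqref{bound init cond} through \eqref{one-zero-crossing of bound}.
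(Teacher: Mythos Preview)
Your proposal is correct and mirrors the paper's own argument almost exactly. The paper does not give a full proof of this theorem (it is cited from \cite{delimpaltadakis_tac}), but the intuitive sketch it provides follows precisely your structure: define $\tau^\downarrow(x)$ as the unique zero of $\mu(x,\cdot)$ guaranteed by \eqref{one-zero-crossing of bound}; use \eqref{bound init cond} and \eqref{bound time validity} at $t=\tau(x)$ to conclude $\tau^\downarrow(x)\leq\tau(x)$ (inner-approximation); identify $\underline{M}_{\tau_\star}$ with the level set $\{\tau^\downarrow(x)=\tau_\star\}$; and read off the scaling $\tau^\downarrow(\lambda x)=\lambda^{-\alpha}\tau^\downarrow(x)$ from \eqref{scaling of bound}, which then yields the properties of Proposition \ref{manifold properties} by the same computations as for $M_{\tau_\star}$.

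Your treatment is in fact more thorough than the paper's sketch in one respect: you explicitly address property \ref{prop2} via the radial parametrization $u\mapsto(\tau_u/\tau_\star)^{1/\alpha}u$ from $S^{n-1}$ onto $\underline{M}_{\tau_\star}$, and you correctly flag that continuity of $x\mapsto\tau_x$ is the nontrivial ingredient there. The paper simply asserts that the properties of Proposition \ref{manifold properties} follow and defers the technical details to \cite{delimpaltadakis_tac}. Your honesty about where the regularity of $\mu$ (beyond the bare hypotheses \eqref{bound requirements}) is actually needed is a useful clarification that the paper's sketch glosses over.
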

Let us briefly explain what is the intuition behind this theorem. Since \eqref{bound init cond} and \eqref{bound time validity} hold, if we denote by $\tau^{\downarrow}(x):=\inf\{t>0:\mu(x,t)=0\}$ (from \eqref{one-zero-crossing of bound} we know that it exists), then $\tau^{\downarrow}(x)\leq\tau(x)$. Note that it is important that inequality \eqref{bound time validity} extends at least until $t=\tau(x)$, in order for $\tau^{\downarrow}(x)\leq\tau(x)$. Then, by the scaling law \eqref{intersampling_scaling}, we have that the set $\underline{M}_{\tau_{\star}}=\{x\in\real^n:\tau^{\downarrow}(x)=\tau_{\star}\}$ is an inner-approximation of the isochronous manifold $M_{\tau_\star}$. Moreover, since for each $x$ the equation $\mu(x,t)=0$ has a unique solution w.r.t. $t$ (from \eqref{one-zero-crossing of bound}), we get that $\underline{M}_{\tau_{\star}}\equiv\{x\in\mathbb{R}^n: \mu(x,\tau_{\star})=0\}$. Finally, condition \eqref{scaling of bound} implies that $\tau^{\downarrow}(\lambda x)=\lambda^{-\alpha}\tau^{\downarrow}(x)$ (observe the similarity between \eqref{scaling of bound} and \eqref{trig_function_scaling}), which in turn implies that the sets $\underline{M}_{\tau_{\star}}$ satisfy the properties of Proposition \eqref{manifold properties}. We do not elaborate more on the technical details here (e.g. how is the bounding carried out), since we address these later in the document, where we extend the theoretical results of \cite{delimpaltadakis_tac} to perturbed/uncertain systems.

\subsection{Homogenization of Nonlinear Systems and Region-Based STC}

To exploit aforementioned properties of homogeneous systems, the homogenization procedure proposed by \cite{anta2012isochrony} is employed. Any non-homogeneous system \eqref{etc_system} is rendered homogeneous of degree $\alpha>0$, by embedding it into $\real^{2n+1}$ and adding a dummy variable $w$:
\begin{equation}\label{homogenized_etc_system}
\begin{bmatrix}
\dot{\xi}\\\dot{w}
\end{bmatrix} = \begin{bmatrix}
w^{\alpha+1}f_e(w^{-1}\xi)\\0
\end{bmatrix} = \tilde{f}_e(\xi,w)
\end{equation}
The same can be done for non-homogeneous triggering functions $\tilde{\phi}(\xi,w)=w^{\theta+1}\phi(w^{-1}\xi)$. Notice that the trajectories of the original ETC system \eqref{etc_system} with initial condition $(x_0,e_0)\in\real^{2n}$ coincide with the trajectories of the homogenized one \eqref{homogenized_etc_system} with initial condition $(x_0,e_0,1)\in\real^{2n+1}$, projected to the $\xi$-variables. The same holds for a homogenized triggering function. Thus, the inter-sampling times $\tau(x_0)$ of system \eqref{etc_system} with triggering function $\phi(\cdot)$ coincide with the inter-sampling times $\tau\Big((x_0,1)\Big)$ of \eqref{homogenized_etc_system} with triggering function $\tilde{\phi}(\cdot)$.

Consequently, if the original system (or the triggering function) is non-homogeneous, then first it is rendered homogeneous via the homogenization procedure \eqref{homogenized_etc_system}. Afterwards, inner-approximations of isochronous manifolds for the homogenized system \eqref{homogenized_etc_system} are derived. Since trajectories of the original system are mapped to trajectories on the $w=1$-plane of the homogenized one (i.e. the state-space of the original system is mapped to the $w=1$-plane), to determine the inter-sampling time $\tau_i$ of a state $x_0\in\real^n$, one has to check to which region $\reg_i\subset\real^{n+1}$ the point $(x_0,1)$ belongs. For an illustration, see Figure \ref{w_1_discr}: 
e.g. given a state $x_0\in\real^n$, if $(x_0,1)\in\real^{n+1}$ lies on the cyan segment (i.e. it is contained in $\reg_1$), then the STC inter-sampling time that is assigned to $x_0$ is $\tau^\downarrow(x_0)=\tau_1$.

Note that, here, it suffices to inner-approximate the isochronous manifolds of \eqref{homogenized_etc_system} only in the subspace $w>0$, since we only care about determining regions $\reg_i$ for points $(x_0,1)\in\real^{2n}$. Thus, the conditions of Theorem \ref{theorem1_tac} can be relaxed so that they hold only in the subspace $w>0$, i.e. for all $(x,w)\in(\real^n\setminus\{0\})\times\real_+$. 
\begin{figure}[!h]
	\centering
	\includegraphics[width=2in]{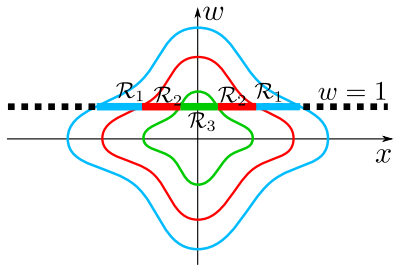}
	\caption{Inner-approximations of isochronous manifolds (coloured curves) for a homogenized system \eqref{homogenized_etc_system} and the regions $\reg_i$ between them. The coloured segments on the $w=1$-plane represent the subsets of the hyperplane $w=1$ (i.e. the subsets of the original state space) that are contained in the regions $\reg_i$ and are associated to the corresponding inter-sampling times $\tau_i$.}
	\label{w_1_discr}
\end{figure} 

\section{Perturbed/Uncertain ETC Systems as Differential Inclusions}
In this section, we show how a general perturbed/uncertain nonlinear system \eqref{perturbed_etc_sys}, satisfying Assumption \ref{assum1}, can be abstracted by a homogeneous DI. Moreover, we extend the notion of inter-sampling times in the context of DIs and show that scaling law \eqref{intersampling_scaling} holds for inter-sampling times of homogeneous DIs. These results are used afterwards in Section \ref{region_based_stc_section}, to derive inner-approximations of isochronous manifolds of perturbed/uncertain systems \eqref{perturbed_etc_sys}, and thus enable the region-based STC scheme.

\subsection{Abstractions by Differential Inclusions}
Notice that, since system \eqref{perturbed_etc_sys} is a time-varying system, many notions that we introduced before for time-invariant systems are now ill-defined. For example, depending on the realization of the unknown signal $d(t)$, a sampled state $x\in\real^n$ can correspond to different inter-sampling times, i.e. definition \eqref{intersampling_time} is ill-posed. However, employing item \ref{boundedness_assumption} of Assumption \ref{assum1} and the notion of differential inclusions, we can abstract the behaviour of the family of systems \eqref{perturbed_etc_sys} and remove such dependencies. In particular, system \eqref{perturbed_etc_sys} can be abstracted by the following differential inclusion:
\begin{equation}\label{diff_incl_1}
\dot{\xi}(t) \in F(\xi(t)) := \{f_e(\xi(t),d(t)):\text{ }d(t)\in\Delta\}.
\end{equation}
For DI \eqref{diff_incl_1} (i.e. for the family of systems \eqref{perturbed_etc_sys}), the inter-sampling time $\tau(x)$ of a point $x\in\real^n$ can now be defined as the worst-case possible inter-sampling time of $x$, under any possible signal $d(t)$ satisfying Assumption \ref{assum1}:
\begin{definition}[Inter-sampling Times of DI]
	Consider the family of systems \eqref{perturbed_etc_sys}, the DI \eqref{diff_incl_1} abstracting them, and a triggering function $\phi:\real^{2n}\to\real$. Let Assumption \ref{assum1} hold. For any point $x\in\real^n$, we define its inter-sampling time as:
	\begin{equation}\label{intersampling_times_di}
	\tau(x):=\inf\bigg\{t>0:\text{ }\sup\Big\{\phi\Big(\reach^F_t((x,0))\Big)\Big\}\geq 0\bigg\},
	\end{equation}
\end{definition}
Note that we have already emphasized that we consider initial conditions $(x,0)\in\real^{2n}$, since at any sampling time the measurement error $\varepsilon_\zeta=0$. Finally, now that inter-sampling times of systems \eqref{perturbed_etc_sys} abstracted by DIs are well-defined, we can accordingly re-define isochronous manifolds for families of such systems as: $M_{\tau_{\star}}=\{x\in\mathbb{R}^n : \tau(x)=\tau_{\star}\}$, where $\tau(x)$ is defined in \eqref{intersampling_times_di}.

\subsection{Homogenization of Differential Inclusions and Scaling of Inter-Sampling Times}\label{tac_homogenization}
As previously mentioned, the scaling law of inter-sampling times \eqref{intersampling_scaling} for homogeneous systems is of paramount importance for the approach of \cite{delimpaltadakis_tac}. We show that a similar result can be derived for inter-sampling times \eqref{intersampling_times_di} of DIs. First, observe that DI \eqref{diff_incl_1} can be rendered homogeneous of degree $\alpha>0$, by slightly adapting the homogenization procedure \eqref{homogenized_etc_system} as follows:
\begin{equation}\label{main_diff_incl}
\begin{bmatrix}
\dot{\xi}(t)\\ \dot{w}(t)
\end{bmatrix}\in \tilde{F}(\xi(t),w(t)), 
\end{equation}
where $\tilde{F}(\xi,w):=\begin{bmatrix}
\{w^{\alpha+1}f_e(w^{-1}\xi,d(t)):\text{ }d(t)\in\Delta\}\\\{0\}
\end{bmatrix}$. Indeed, $\tilde{F}(\cdot,\cdot)$ is homogeneous of degree $\alpha$.
Recall that the same can be done for a non-homogeneous triggering function: 
\begin{equation}\label{homogenized_phi}
	\tilde{\phi}(\xi,w)=w^{\theta+1}\phi(w^{-1}\xi).
\end{equation} 
Again, trajectories and flowpipes of \eqref{diff_incl_1} with initial condition $(x_0,e_0)\in\real^{2n}$ coincide with the projection to the $\xi$-variables of trajectories of \eqref{main_diff_incl} with initial condition $(x_0,e_0,1)\in\real^{2n+1}$. This implies that the inter-sampling time $\tau(x_0)$ for DI \eqref{diff_incl_1} with triggering function $\phi(\cdot)$, defined as in \eqref{intersampling_times_di}, is the same as the inter-sampling time $\tau\Big((x_0,1)\Big)$ for DI \eqref{main_diff_incl} with triggering function $\tilde{\phi}(\cdot)$.

Given the above, by employing the scaling property \eqref{scaling_di} of flowpipes of homogeneous DIs, we can prove that the scaling law holds for inter-sampling times of DIs \eqref{main_diff_incl}:
\begin{theorem}\label{di_intersampling_scaling_theorem}
	Consider DI \eqref{main_diff_incl}, the triggering function $\tilde{\phi}(\cdot)$ from \eqref{homogenized_phi}, and let Assumption \ref{assum1} hold. The inter-sampling time $\tau\Big((x,w)\Big)$, where $(x,w)\in\real^{n+1}$, scales for any $\lambda>0$ as:
	\begin{equation}
		\tau\Big(\lambda (x,w)\Big) = \lambda^{-\alpha}\tau\Big((x,w)\Big),
	\end{equation}
	where $\tau(\cdot)$ is defined in \eqref{intersampling_times_di}.
\end{theorem}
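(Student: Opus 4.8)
The plan is to unfold the definition \eqref{intersampling_times_di} of the inter-sampling time at the dilated initial state and then extract the factor $\lambda$ separately from the spatial argument and from the triggering function, so that only a rescaling of the time variable survives. Write $y$ for the homogenized initial state at a sampling instant (so $y=((x,0),w)$ in the notation of \eqref{intersampling_times_di} and \eqref{main_diff_incl}, with the error component zero); observe first that the dilation preserves this structure, since $\lambda\cdot 0=0$, so $\lambda y$ is again a legitimate zero-error initial condition. The starting point is
\[
\tau(\lambda y)=\inf\Big\{t>0:\ \sup\tilde\phi\big(\reach^{\tilde F}_t(\lambda y)\big)\geq 0\Big\}.
\]

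The first step is to rewrite the reachable set via the scaling property \eqref{scaling_di} of flowpipes of homogeneous DIs, applied with $\init=\{y\}$: since $\tilde F$ is homogeneous of degree $\alpha$ (as remarked below \eqref{main_diff_incl}), $\reach^{\tilde F}_t(\lambda y)=\lambda\,\reach^{\tilde F}_{\lambda^\alpha t}(y)$. The second step pushes $\tilde\phi$ through this spatial dilation: because $\tilde\phi$ is homogeneous of degree $\theta$ by construction \eqref{homogenized_phi}, for every point $p$ in the reachable set $\tilde\phi(\lambda p)=\lambda^{\theta+1}\tilde\phi(p)$, so the value-set obeys $\tilde\phi\big(\lambda\,\reach^{\tilde F}_{\lambda^\alpha t}(y)\big)=\lambda^{\theta+1}\,\tilde\phi\big(\reach^{\tilde F}_{\lambda^\alpha t}(y)\big)$. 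As $\lambda^{\theta+1}>0$, the supremum commutes with this multiplication and, crucially, the threshold condition is preserved:
\[
\sup\tilde\phi\big(\reach^{\tilde F}_t(\lambda y)\big)\geq 0 \iff \sup\tilde\phi\big(\reach^{\tilde F}_{\lambda^\alpha t}(y)\big)\geq 0 .
\]
This equivalence is the one place where it is essential that $\lambda>0$, so that neither the supremum nor the direction of the inequality flips.

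It remains to change the time variable. The equivalence above yields $\tau(\lambda y)=\inf\{t>0:\ \sup\tilde\phi(\reach^{\tilde F}_{\lambda^\alpha t}(y))\geq 0\}$; substituting $s=\lambda^\alpha t$, the map $t\mapsto\lambda^\alpha t$ is an increasing bijection of $(0,\infty)$ onto itself, so the feasible $t$-set is carried onto $\{s>0:\ \sup\tilde\phi(\reach^{\tilde F}_s(y))\geq 0\}$ and taking the infimum picks up the factor $\lambda^{-\alpha}$. This gives $\tau(\lambda y)=\lambda^{-\alpha}\tau(y)$, i.e. $\tau(\lambda(x,w))=\lambda^{-\alpha}\tau((x,w))$, as claimed.

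The argument has no deep obstacle; the care lies entirely in the set-level bookkeeping. The two points to get right are that applying a homogeneous function to a dilated set multiplies the entire value-set by the positive constant $\lambda^{\theta+1}$ (so that $\sup$ factors out and the sign is retained), and that the reachable-set scaling \eqref{scaling_di} -- stated for arbitrary $\init$ -- is being used on the singleton $\{y\}$ appearing in \eqref{intersampling_times_di}. Well-definedness of $\tau$ (nonempty feasible set, positive infimum, attained suprema) is inherited from Assumption \ref{assum1}: items \ref{bounded_fe}--\ref{boundedness_assumption} provide existence and compactness of the reachable sets and, together with continuity of $\tilde\phi$ from item \ref{trig_fun_differentiability}, make the suprema finite maxima, while item \ref{trig_fun_negativity} excludes Zeno behaviour. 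The scaling identity itself holds verbatim even in the degenerate case $\tau(y)=+\infty$, reading as $+\infty=\lambda^{-\alpha}\cdot(+\infty)$.
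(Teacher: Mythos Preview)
Your proof is correct and follows essentially the same approach as the paper: apply the scaling property \eqref{scaling_di} to the reachable set, use homogeneity of $\tilde\phi$ to pull out the positive factor $\lambda^{\theta+1}$ from the supremum, and then perform the time substitution $s=\lambda^\alpha t$ to extract $\lambda^{-\alpha}$ from the infimum. The extra remarks you add about $\lambda>0$ being needed to preserve the inequality, about well-definedness via Assumption~\ref{assum1}, and about the $\tau=+\infty$ case are sound and make the argument slightly more self-contained than the paper's version, but the logical skeleton is identical.
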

\begin{proof}
	See Appendix.
\end{proof}
For an example of how DIs and triggering functions are homogenized, the reader is referred to Section \ref{numerical example section}.

\section{Region-Based STC for Perturbed/Uncertain Systems} \label{region_based_stc_section}
In this section, we use the previous derivations about differential inclusions to inner-approximate isochronous manifolds of perturbed/uncertain systems, by adapting the technique of \cite{delimpaltadakis_tac}. Using the derived inner-approximations, the state-space partitioning into regions $\reg_i$ is generated. Finally, we show that the applicability of region-based STC for perturbed/uncertain systems is semiglobal.

\subsection{Approximations of Isochronous Manifolds of Perturbed/Uncertain ETC Systems}\label{approximations section}
Similarly to \cite{delimpaltadakis_tac}, we upper-bound the time evolution of the (homogenized) triggering function $\tilde{\phi}(\xi(t;x),w(t))$ along the trajectories of DI \eqref{main_diff_incl} with a function $\mu\Big((x,w),t\Big)$ in analytic form that satisfies \eqref{bound requirements}. For this purpose, first we provide a lemma, similar to the comparison lemma \cite{khalil1996noninear} and to Lemma V.2 from \cite{delimpaltadakis_tac}, that shows how to derive upper-bounds with linear dynamics of functions evolving along flowpipes of differential inclusions: 
\begin{lemma}\label{bounding_lemma}
	Consider a system of ODEs: 
	\begin{equation}\label{ode_lemma}
		\dot{\xi}(t)=f(\xi(t),d(t)),
	\end{equation}
	where $\xi(t)\in\real^n$, $d(t)\in\real^{m_d}$, $f:\real^n\times\real^{m_d}\to\real^n$ and the function  $\phi:\real^n\to\real$. Let $f$, $d$ and $\phi$ satisfy Assumption \ref{assum1}. Consider the DI abstracting the family of ODEs \eqref{ode_lemma}:
	\begin{equation}\label{di_lemma}
		\dot{\xi}(t) \in F(\xi(t)) := \{f(\xi(t),d(t)):\text{ }d(t)\in\Delta\}.
	\end{equation}
	Consider a compact set $\Xi \subseteq \real^n$. For coefficients $\delta_{0}, \delta_{1}\in\real$ satisfying:
	\begin{equation}\label{delta inequality}
	\frac{\partial \phi}{\partial z}(z)f(z,u) \leq \delta_0\phi(z) + \delta_1, \quad \forall z \in \Xi \text{ and }\forall u\in\Delta,
	\end{equation}
	the following inequality holds for all $\xi_0 \in \Xi$:
	\begin{equation*}
	\sup\Big\{\phi\Big(\reach_t^F(\xi_0)\Big)\Big\} \leq \psi(y(\xi_0),t) \quad \forall t \in [0,t_e(\xi_0)],
	\end{equation*}
	where $t_e(\xi_0)$ is defined as the escape time: 
	\begin{equation}
	t_e(\xi_0)=\inf\{t>0:\reach_t^F(\xi_0) \not\subseteq\Xi\},
	\label{escape time}
	\end{equation}
	and $\psi(y(\xi_0),t)$ is:
	\begin{equation} \label{psi1}
	\psi(y(\xi_0),t) = \begin{bmatrix}1&0\end{bmatrix}\boldsymbol{e}^{At}y(\xi_0),
	\end{equation}
	where: 
	\begin{equation}\label{A_matrix}
	A = \begin{bmatrix}
	\delta_0 &1\\ 0 &0
	\end{bmatrix},\quad y(\xi_0) = \begin{bmatrix}
	\phi(\xi_0)\\ \delta_1
	\end{bmatrix}.
	\end{equation}
\end{lemma}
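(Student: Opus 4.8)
The plan is to reduce the statement to the classical scalar comparison lemma applied to the value of $\phi$ along an \emph{arbitrary} solution of the DI \eqref{di_lemma}, and then to take the supremum over all solutions. First I would record the setup: by items \ref{bounded_fe} and \ref{boundedness_assumption} of Assumption \ref{assum1}, the DI \eqref{di_lemma} satisfies the standard hypotheses, so its solutions are absolutely continuous functions, and by definition of the reachable set every point of $\reach_t^F(\xi_0)$ equals $\xi(t;\xi_0)$ for some solution $\xi(\cdot;\xi_0)\in\mathcal{S}_F([0,t];\{\xi_0\})$. It therefore suffices to bound $\phi(\xi(t;\xi_0))$ along one such solution by a quantity independent of the solution.

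Fix $\xi_0\in\Xi$ and an arbitrary solution $\xi(\cdot;\xi_0)$, and set $v(t):=\phi(\xi(t;\xi_0))$. Since $\phi$ is continuously differentiable (item \ref{trig_fun_differentiability}) and $\xi(\cdot)$ is absolutely continuous, $v$ is absolutely continuous and the chain rule gives, for almost every $t$, $\dot v(t)=\frac{\partial\phi}{\partial z}(\xi(t))\,\dot\xi(t)$. By the definition of a solution of the DI there is a measurable selection $d(t)\in\Delta$ with $\dot\xi(t)=f(\xi(t),d(t))$ a.e. For $t<t_e(\xi_0)$ the reachable set, and hence $\xi(t)$, lies in $\Xi$ (this is exactly what the escape time \eqref{escape time} guarantees), so hypothesis \eqref{delta inequality} applies pointwise and yields the scalar differential inequality $\dot v(t)\le\delta_0 v(t)+\delta_1$ for almost every $t\in[0,t_e(\xi_0))$, with $v(0)=\phi(\xi_0)$.

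Next I would invoke the comparison principle against the linear flow defining $\psi$. Let $y_1(t)$ solve $\dot y_1=\delta_0 y_1+\delta_1$ with $y_1(0)=\phi(\xi_0)$; a direct computation using the block structure of $A$ in \eqref{A_matrix} identifies $y_1(t)$ with the first component of $e^{At}y(\xi_0)$, i.e. $y_1(t)=\psi(y(\xi_0),t)$ from \eqref{psi1}. Writing $u:=v-y_1$, we obtain $\dot u\le\delta_0 u$ a.e. and $u(0)=0$; multiplying by the integrating factor $e^{-\delta_0 t}$ shows that $t\mapsto e^{-\delta_0 t}u(t)$ is absolutely continuous with almost-everywhere nonpositive derivative, hence nonincreasing, so $u(t)\le 0$ and $v(t)\le\psi(y(\xi_0),t)$ on $[0,t_e(\xi_0))$. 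Closedness of the compact set $\Xi$ together with continuity of $v$ and $\psi$ extends the estimate to the endpoint $t=t_e(\xi_0)$. Since $\psi(y(\xi_0),t)$ depends only on $\xi_0$ and the coefficients $\delta_0,\delta_1$, and not on the chosen solution, it bounds $\phi$ at every point of $\reach_t^F(\xi_0)$, and taking the supremum yields $\sup\{\phi(\reach_t^F(\xi_0))\}\le\psi(y(\xi_0),t)$.

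The main obstacle I expect is the rigorous passage from the pointwise algebraic inequality \eqref{delta inequality} to the a.e.\ differential inequality along DI solutions and then to the comparison estimate: this relies on the chain rule for the composition of a $C^1$ function with an absolutely continuous DI solution (valid only almost everywhere), on a measurable selection of the disturbance realization $d(t)$, and on a Gronwall-type argument that tolerates inequalities holding only a.e. The escape time $t_e(\xi_0)$ is indispensable, since \eqref{delta inequality} is assumed only on $\Xi$, so the bound can be certified only while the reachable set remains inside $\Xi$.
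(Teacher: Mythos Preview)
Your proposal is correct and follows essentially the same route as the paper: derive the scalar differential inequality $\dot v\le\delta_0 v+\delta_1$ along each trajectory while it remains in $\Xi$, compare with the solution of $\dot\psi=\delta_0\psi+\delta_1$ (which is exactly $\psi(y(\xi_0),t)$), and then take the supremum over solutions. The only cosmetic difference is that the paper parameterizes trajectories by the disturbance realization $d_\star(\cdot)$ and invokes the comparison lemma from Khalil directly, whereas you phrase the same step in terms of DI solutions with measurable selections and carry out the Gronwall argument explicitly; your treatment of the a.e.\ chain rule and the extension to the endpoint is in fact more careful than the paper's.
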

\begin{proof}
	See Appendix.
\end{proof}
Observe that, in contrast to Lemma V.2 from \cite{delimpaltadakis_tac} where the coefficients $\delta_i$ need to be positive, here $\delta_i\in\real$. This is because here, due to lack of knowledge on the derivative (or even on the differentiability) of the unknown signal $d(t)$, we consider only the first-order time-derivative of $\phi$ (first-order comparison), while in \cite{delimpaltadakis_tac} higher-order derivatives of $\phi$ are considered (higher-order comparison). For more information on the higher-order comparison lemma, the reader is referred to \cite{delimpaltadakis_tac} and the references therein. 

Now, we employ Lemma \ref{bounding_lemma}, in order to construct an upper-bound $\mu\Big((x,w),t\Big)$ of the triggering function $\tilde{\phi}(\xi(t;x),w(t))$ that satisfies the conditions \eqref{bound requirements} (in the subspace $w>0$), which in turn implies that the zero-level sets of $\mu\Big((x,w),t\Big)$ are inner-approximations of isochronous manifolds of DI \eqref{main_diff_incl} and satisfy the properties mentioned in Proposition \ref{manifold properties}. First, consider a compact connected set $\setz\subset\real^n$ with $0\in\mathrm{int}(\setz)$, and the set $\setw =[\underline{w},\overline{w}]$, where $\overline{w}>\underline{w}>0$. Define the following sets:
\begin{equation}\label{sets}
	\begin{aligned}
		&\Phi :=\bigcup\limits_{x_0\in\setz} \{x\in\real^n: e=x_0-x, w\in\setw, \tilde{\phi}\Big((x,e,w)\Big)\leq 0\},\\
		&\sete := \{x_0-x\in\real^n:x_0\in\setz,\text{ }x\in\Phi\},\\
		&\Xi := \Phi\times\sete\times\setw.
	\end{aligned}
\end{equation}
For the remaining, we assume the following:
\begin{assumption} \label{assum2}
	The set $\Phi\subset\real^n$ is compact.
\end{assumption}
Assumption \ref{assum2} is is satisfied by most triggering functions $\phi(\cdot)$ in the literature (e.g. Lebesgue sampling and most cases of Mixed Triggering from Remark \ref{trig_functions_remark}, the triggering functions of \cite{tabuada2007etc,girard2015dynamicetc}, etc.). Moreover, since $\Phi$ is assumed compact, then $\sete$ is compact as well, which implies that $\Xi$ is compact.  
\begin{remark}
	As it is discussed after Theorem \ref{main theorem}, the sets $\setz, \setw,\Phi,\sete,\Xi$ are constructed such that for all initial conditions $(x,0,w)\in\setz\times\sete\times\setw$, the trajectories of DI \eqref{main_diff_incl} reach the boundary of $\Xi$ after (or at) the inter-sampling time $t=\tau\Big((x,w)\Big)$. An alternative construction of such sets 
	has been proposed in \cite{tosample,delimpaltadakis_tac} and utilizes a given Lyapunov function for system \eqref{perturbed_etc_sys} and its level sets. 
\end{remark}

The following theorem shows how the bound $\mu\Big((x,w),t\Big)$ is constructed:
\begin{theorem}\label{main theorem}
	Consider the family of ETC systems \eqref{perturbed_etc_sys}, the DI \eqref{main_diff_incl} abstracting them, a homogenized triggering function $\tilde{\phi}(\xi(t;x),w(t))$, the sets $\setz, \setw,\Phi,\sete,\Xi$ defined in \eqref{sets} and let Assumptions \ref{assum1} and \ref{assum2} hold. Let $\delta_0\geq0$ and $\delta_1>0$ be such that:
	\begin{subequations}\label{delta_requirements}
		\begin{align}
		&\forall(z,w,u)\in\Xi\times\Delta:
		\qquad\frac{\partial \tilde{\phi}}{\partial z}(z,w)w^{\alpha+1}f_e(w^{-1}z,u) \leq \delta_0\tilde{\phi}(z,w) + \delta_1, \label{delta_ineq_theorem}\\
		&\forall(z,w)\in\setz\times\{0\}\times\setw:\text{ }\delta_0\tilde{\phi}(z,w)+\delta_1\geq\varepsilon_\delta>0,\label{delta_init_cond}
		\end{align}
	\end{subequations}
	where $\varepsilon_\delta$ an arbitrary positive constant. Let $r>\underline{w}$ be such that $D_r:=\{(x,w)\in\real^{n+1}:|(x,w)|=r, \text{ }w\in\setw\}\subset\setz\times\setw$. For all $(x,w)\in\real^{n+1}\setminus\{0\}$ define the function:
	\begin{equation}\label{mu}
		\mu\Big((x,w),t\Big) := \Big(\tfrac{|(x,w)|}{r}\Big)^{\theta+1}\begin{bmatrix}1&0\end{bmatrix}\boldsymbol{e}^{A\Big(\tfrac{|(x,w)|}{r}\Big)^\alpha t}y(x,w),
	\end{equation}
	where $A$ is as in \eqref{A_matrix} and:
	\begin{equation*}
		y(x,w) = \begin{bmatrix}
		\tilde{\phi}\Big((r\tfrac{x}{|(x,w)|},0,r\tfrac{w}{|(x,w)|})\Big)\\
		\delta_1
		\end{bmatrix}.
	\end{equation*}
	The function $\mu\Big((x,w),t\Big)$ satisfies \eqref{bound init cond}, \eqref{scaling of bound}, \eqref{one-zero-crossing of bound} for all $(x,w)\in(\real^n\times\real_+)\setminus\{0\}$, but condition \eqref{bound time validity} is satisfied only in the cone
	\begin{equation}\label{cone}
		\cone=\{(x,w)\in\real^n\times\real_+:|x|^2+w^2\leq\frac{w^2}{\underline{w}^2}r^2\}\setminus\{0\}
	\end{equation}
	and $\forall t\in[0,\tau\Big((x,w)\Big)]$.
\end{theorem}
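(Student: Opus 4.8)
The plan is to verify the four requirements of Theorem~\ref{theorem1_tac}---\eqref{bound init cond}, \eqref{scaling of bound}, \eqref{one-zero-crossing of bound} and \eqref{bound time validity}---the first three being algebraic consequences of the definition \eqref{mu} together with Assumption~\ref{assum1}, and the last being where Lemma~\ref{bounding_lemma} and the geometry of the cone \eqref{cone} enter. Write $\rho:=|(x,w)|$, $\lambda:=\rho/r$, and $\bar p:=r(x,w)/\rho$ for the radial projection of $(x,w)$ onto the sphere of radius $r$; the structural observation is that \eqref{mu} reads $\mu\big((x,w),t\big)=\lambda^{\theta+1}\psi\big(y(x,w),\lambda^{\alpha}t\big)$, with $\psi$ the scalar bound \eqref{psi1} of Lemma~\ref{bounding_lemma} and $y(x,w)=y(\bar p)$ depending on $(x,w)$ only through its direction. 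Since $\bar p$, and hence $y$, is invariant under $(x,w)\mapsto\kappa(x,w)$ for $\kappa>0$, substituting $|\kappa(x,w)|=\kappa\rho$ into \eqref{mu} and factoring $\kappa^{\theta+1}$ out of the prefactor and $\kappa^{\alpha}$ out of the exponent gives \eqref{scaling of bound} by inspection. Setting $t=0$ gives $\mu\big((x,w),0\big)=\lambda^{\theta+1}\tilde\phi\big(r\tfrac{x}{\rho},0,r\tfrac{w}{\rho}\big)$, which by \eqref{homogenized_phi} is negative because the error slot vanishes and $r\tfrac{w}{\rho}>0$, so item~\ref{trig_fun_negativity} of Assumption~\ref{assum1} applies; this is \eqref{bound init cond}.

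For \eqref{one-zero-crossing of bound}, I use that $\psi(y,\cdot)$ solves the scalar linear comparison ODE $\dot\psi=\delta_0\psi+\delta_1$ with $\psi(0)=\tilde\phi(\bar p)<0$, as encoded by \eqref{A_matrix}. The reparametrisation $t\mapsto\lambda^{\alpha}t$ is a positive bijection, so zeros of $\mu$ in $t$ correspond bijectively to zeros of $\psi$. At any zero one has $\dot\psi=\delta_1>0$, so $\psi$ crosses zero strictly increasingly and can vanish at most once; existence follows from $\delta_0\ge0$, $\delta_1>0$ and \eqref{delta_init_cond}, which give $\delta_0\tilde\phi(\bar p)+\delta_1>0$ and hence force $\psi$ to grow from its negative initial value to $+\infty$ (linearly if $\delta_0=0$). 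This yields a unique $\tau_{(x,w)}>0$.

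The crux is \eqref{bound time validity}, which I would obtain in two moves. First, apply Lemma~\ref{bounding_lemma} to the homogenised DI \eqref{main_diff_incl} on the compact set $\Xi$, with trigger $\tilde\phi$ and coefficients obeying \eqref{delta_ineq_theorem}, at the initial condition $\bar\xi_0=(r\tfrac{x}{\rho},0,r\tfrac{w}{\rho})$; this yields $\sup\big\{\tilde\phi\big(\reach^{\tilde{F}}_{s}(\bar\xi_0)\big)\big\}\le\psi\big(y(\bar p),s\big)$ for all $s\in[0,t_e(\bar\xi_0)]$, with $t_e$ the escape time \eqref{escape time}. Second, transport this along the homogeneous ray: by the flowpipe scaling \eqref{scaling_di} and homogeneity of $\tilde\phi$, $\tilde\phi\big(\reach^{\tilde{F}}_{t}((x,0,w))\big)=\lambda^{\theta+1}\tilde\phi\big(\reach^{\tilde{F}}_{\lambda^{\alpha}t}(\bar\xi_0)\big)$, so taking suprema and inserting the first bound reproduces exactly $\mu\big((x,w),t\big)$ for every $t$ with $\lambda^{\alpha}t\le t_e(\bar\xi_0)$.

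It remains to push the validity window to $t=\tau((x,w))$ and to identify the cone. The scaling law of inter-sampling times, Theorem~\ref{di_intersampling_scaling_theorem}, gives $\tau((x,w))=\lambda^{-\alpha}\tau(\bar p)$, and the sets $\setz,\setw,\Phi,\sete,\Xi$ are constructed so that trajectories started in $\setz\times\{0\}\times\setw$ do not leave $\Xi$ before the inter-sampling time, i.e. $t_e(\bar\xi_0)\ge\tau(\bar p)$; hence $\lambda^{\alpha}t\le\tau(\bar p)\le t_e(\bar\xi_0)$ for all $t\in[0,\tau((x,w))]$, and the transported bound is valid on the whole interval. This argument needs $\bar\xi_0\in\setz\times\{0\}\times\setw$, i.e. $\bar p\in D_r$, which requires $r\tfrac{w}{\rho}\ge\underline w$; rearranging gives $|x|^2+w^2\le(w^2/\underline w^2)r^2$, precisely the cone $\cone$ of \eqref{cone}. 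I expect the escape-time bookkeeping to be the delicate step---certifying $t_e(\bar\xi_0)\ge\tau(\bar p)$ from the construction of $\Xi$ and matching it through \eqref{scaling_di} and Theorem~\ref{di_intersampling_scaling_theorem} to the window $[0,\tau((x,w))]$ on the scaled ray---since outside $\cone$ the projected point falls below $\setw$, where neither the comparison bound nor the escape-time guarantee is available and \eqref{bound time validity} cannot be claimed.
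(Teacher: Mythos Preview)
Your proposal is correct and follows essentially the same route as the paper's proof: algebraically verify \eqref{bound init cond}, \eqref{scaling of bound}, \eqref{one-zero-crossing of bound} from the definition of $\mu$, then for \eqref{bound time validity} apply Lemma~\ref{bounding_lemma} at the radially projected point on $D_r$, use the construction of $\Phi,\sete,\Xi$ to ensure $t_e(\bar\xi_0)\ge\tau(\bar p)$, and transport the bound along the homogeneous ray via \eqref{scaling_di}, the homogeneity of $\tilde\phi$, and Theorem~\ref{di_intersampling_scaling_theorem}, with the cone $\cone$ arising from the requirement $\bar p\in D_r$. The only cosmetic difference is ordering: the paper first records the comparison bound for all initial conditions in $\Xi$, then specialises to $\setz\times\{0\}\times\setw$ and to $D_r$, whereas you go directly to the projected point $\bar\xi_0$; the content is the same.
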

\begin{proof}
	See Appendix.
\end{proof}
\begin{remark} \label{delta_existence_remark}
	Observe that, under Assumptions \ref{assum1} and \ref{assum2}, the term $\frac{\partial \tilde{\phi}}{\partial z}(z,w)w^{\alpha+1}f_e(w^{-1}z,u)$ is bounded for all $(z,w,u)\in\Xi\times\Delta$, since $f_e$ is locally bounded, $\phi$ is continuously differentiable (implying that $\tilde{\phi}$ is also continuously differentiable for $w\neq0$), $\tilde{\phi}(z,w)$ is bounded for all $(z,w)\in\setz\times\{0\}\times\setw$ and $\Xi\times\Delta$ is compact and does not contain any point $(z,0,u)$. Thus, coefficients $\delta_0\geq0$ and $\delta_1>0$ satisfying \eqref{delta_requirements} always exist; e.g. $\delta_0=0$ and $\delta_1 > \max\Big\{\epsilon_\delta,\sup\limits_{(z,w,u)\in\Xi\times\Delta}\frac{\partial \tilde{\phi}}{\partial z}(z,w)w^{\alpha+1}f_e(w^{-1}z,u)\Big\}$. In \cite{delimpaltadakis_tac}, a computational algorithm has been proposed, which computes the coefficients $\delta_i$ for a given ETC system and triggering function, by employing Linear Programming and Satisfiability-Modulo Theory solvers (SMT, see e.g. \cite{dreal}).
\end{remark}
Let us explain the intuition behind Theorem \ref{main theorem}. First, observe that, according to Lemma \ref{bounding_lemma}, the coefficients $\delta_0,\delta_1$ satisfying \eqref{delta_ineq_theorem}, determine a function $\psi(y((x,0,w_\star)),t)$ that upper bounds $\sup\Big\{\phi\Big(\reach_t^{\tilde{F}}((x,0,w_\star))\Big)\Big\}$. The sets $\setz,\setw,\Phi,\sete,\Xi$ have been chosen such that the inequality\begin{equation*}
\psi(y((x,0,w_\star)),t)\geq\sup\Big\{\phi\Big(\reach_t^{\tilde{F}}((x,0,w_\star))\Big)\Big\}
\end{equation*}  holds for all $t\in[0,\tau\Big((x,w_\star)\Big)]$. Now, introducing the scaling terms $\Big(\tfrac{|(x,w)|}{r}\Big)^\alpha,r\tfrac{x}{|(x,w)|},$ etc. which projects $\psi(\cdot)$ onto the spherical segment $D_r$ and transforms it into $\mu(\cdot)$, enforces that $\mu(\cdot)$ satisfies \eqref{bound time validity} and the scaling property \eqref{scaling of bound}. Inequalities $\delta_0\geq0,\delta_1>0$ and \eqref{delta_init_cond} enforce that $\mu(\cdot)$ satisfies \eqref{one-zero-crossing of bound}. Finally, \eqref{bound time validity} being satisfied only in the cone $\cone$, stems from the fact that $0\notin\mathrm{int}(\setw)$. Note that $\setw$ is chosen such that it is guaranteed that \eqref{delta_requirements} is well-defined everywhere in $\Xi\times\Delta$.

The fact that \eqref{bound time validity} is satisfied only in the cone $\cone$ has the following implication:
\begin{corollary}[to Theorem \ref{theorem1_tac}]\label{cone corollary}
	Consider the family of ETC systems \eqref{perturbed_etc_sys}, the DI \eqref{main_diff_incl} abstracting them, a (homogenized) triggering function $\tilde{\phi}(\xi(t;x),w(t))$ and let Assumptions \ref{assum1} and \ref{assum2} hold. Consider the function $\mu\Big((x,w),t\Big)$ from \eqref{mu}. The sets $\underline{M}_{\tau_{\star}} = \{(x,w)\in\mathbb{R}^{n+1}: \mu\Big((x,w),\tau_{\star}\Big)=0\}$ inner-approximate isochronous manifolds $M_{\tau_{\star}}$ of DI \eqref{main_diff_incl} inside the cone $\cone$, i.e. for all $(x,w)\in\underline{M}_{\tau_{\star}}\cap\cone$:
	\begin{equation*}
	\begin{aligned}
		&\bullet\exists! \kappa_{(x,w)}\geq 1 \text{ s.t. } \kappa_{(x,w)}(x,w) \in M_{\tau_\star}\\
		&\bullet \not\exists \lambda_{(x,w)}\in(0,1) \text{ s.t. } \lambda_{(x,w)}(x,w) \in M_{\tau_\star}.
	\end{aligned}
	\end{equation*}
	Moreover, the sets $\underline{M}_{\tau_{\star}}$ satisfy the properties mentioned in Proposition \ref{manifold properties}.
\end{corollary}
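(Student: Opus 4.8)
The plan is to leverage Theorem \ref{main theorem} together with the intuition behind Theorem \ref{theorem1_tac}, adapting the argument so that everything is restricted to the cone $\cone$. Theorem \ref{main theorem} gives us that $\mu\Big((x,w),t\Big)$ satisfies conditions \eqref{bound init cond}, \eqref{scaling of bound}, \eqref{one-zero-crossing of bound} globally in the subspace $w>0$, but condition \eqref{bound time validity} only inside $\cone$. So I would first define, as in the discussion following Theorem \ref{theorem1_tac}, the quantity $\tau^{\downarrow}\Big((x,w)\Big) := \inf\{t>0 : \mu\Big((x,w),t\Big)=0\}$, which is well-defined and unique by \eqref{one-zero-crossing of bound}. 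The core claim to establish is that inside the cone, $\tau^{\downarrow}\Big((x,w)\Big)\leq\tau\Big((x,w)\Big)$, and that the level set $\underline{M}_{\tau_\star}$ coincides with $\{(x,w):\tau^{\downarrow}\Big((x,w)\Big)=\tau_\star\}$.

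First I would show the inequality $\tau^{\downarrow}\leq\tau$ inside $\cone$. For $(x,w)\in\cone$, condition \eqref{bound time validity} tells us $\mu\Big((x,w),t\Big)\geq\sup\{\tilde{\phi}\Big(\reach^{\tilde{F}}_t((x,0,w))\Big)\}$ for all $t\in[0,\tau\Big((x,w)\Big)]$. By the definition \eqref{intersampling_times_di} of the inter-sampling time, the right-hand side stays strictly negative on $[0,\tau\Big((x,w)\Big))$ and reaches $0$ at $t=\tau\Big((x,w)\Big)$. Since $\mu\Big((x,w),0\Big)<0$ by \eqref{bound init cond} and $\mu$ upper-bounds a function that hits zero no later than $\tau$, continuity forces $\mu$ to cross zero at some time $\tau^{\downarrow}\leq\tau$, giving the desired inequality. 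Next, by the scaling property \eqref{scaling of bound}, exactly the argument used in the excerpt's explanation of Theorem \ref{theorem1_tac} yields $\tau^{\downarrow}\Big(\lambda(x,w)\Big)=\lambda^{-\alpha}\tau^{\downarrow}\Big((x,w)\Big)$, i.e. $\tau^{\downarrow}$ obeys the same scaling law \eqref{intersampling_scaling} that Theorem \ref{di_intersampling_scaling_theorem} established for the true $\tau$. Combining the scaling laws for both $\tau$ and $\tau^{\downarrow}$ with the inequality $\tau^{\downarrow}\leq\tau$ is what produces the two bulleted geometric statements: along each homogeneous ray, because $\tau(\lambda(x,w))=\lambda^{-\alpha}\tau((x,w))$ is strictly decreasing in $\lambda$ (as $\alpha>0$), a point of $\underline{M}_{\tau_\star}$ (where $\tau^{\downarrow}=\tau_\star$) must have $\tau\geq\tau_\star$, so scaling \emph{up} by some $\kappa_{(x,w)}\geq 1$ is required to reach $\tau=\tau_\star$, while scaling down by $\lambda_{(x,w)}\in(0,1)$ can never reach it; uniqueness of $\kappa_{(x,w)}$ follows from strict monotonicity.

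Finally I would invoke Theorem \ref{theorem1_tac} directly (or rather its proof mechanism) to conclude that $\underline{M}_{\tau_\star}$ satisfies the full list of properties in Proposition \ref{manifold properties}: the scaling \eqref{scaling of bound} and single-zero-crossing \eqref{one-zero-crossing of bound} properties, which hold globally in $w>0$, are precisely the ingredients that force $\underline{M}_{\tau_\star}$ to be an $(n-1)$-dimensional hypersurface meeting each homogeneous ray exactly once and to be properly nested for different $\tau_\star$. I expect the main obstacle to be bookkeeping the restriction to $\cone$ cleanly: the inner-approximation inequality $\tau^{\downarrow}\leq\tau$ is only guaranteed for $(x,w)\in\cone$ because \eqref{bound time validity} fails outside it, so the geometric bullet points must be asserted only for $(x,w)\in\underline{M}_{\tau_\star}\cap\cone$, whereas the Proposition \ref{manifold properties} properties rely on the globally-valid conditions and hence need a careful separation of which conclusions require \eqref{bound time validity} and which do not. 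Verifying that $\cone$ is invariant under the relevant positive scalings (so that the ray-based monotonicity arguments stay inside the region where the bound is valid) is the delicate point, and follows from the homogeneous-cone form of \eqref{cone}.
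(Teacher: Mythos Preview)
Your proposal is correct and takes essentially the same approach as the paper. The paper's proof is a one-liner stating that the argument is identical to the proof of Theorem \ref{theorem1_tac} in \cite{delimpaltadakis_tac}, with the sole modification that everything is restricted to $(x,w)\in\cone$; you have simply spelled out what that argument consists of, including the correct separation between the inner-approximation statement (which needs \eqref{bound time validity} and hence the cone restriction) and the Proposition \ref{manifold properties} properties (which rely only on the globally-valid conditions \eqref{scaling of bound} and \eqref{one-zero-crossing of bound}).
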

\begin{proof}
	It follows identical arguments to the proof of Theorem \ref{theorem1_tac} in \cite{delimpaltadakis_tac}. The only difference is that the arguments are now made for all $(x,w)\in \cone$ and not for all $(x,w)\in\real^{n+1}$.
\end{proof}
The implications of the above corollary are depicted in Figure \ref{semiglobal_approx_fig}. 
According to Section \ref{inner_approx_section}, since the zero-level sets $\underline{M}_{\tau_i}$ of $\mu\Big((x,w),t\Big)$ inner-approximate isochronous manifolds inside $\cone$, for the regions $\reg_i$ that are delimited by consecutive approximations $\underline{M}_{\tau_i}$ and the cone $\cone$ (see Figure \ref{semiglobal_approx_fig}) it holds that: $\tau_i\leq\tau\Big((x,w)\Big)$.
\begin{figure}[!h]
	\centering
	\includegraphics[width=2in]{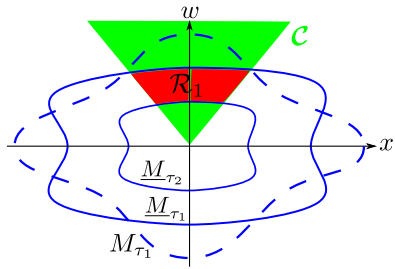}
	\caption{Isochronous manifold $M_{\tau_1}$ (solid line) and approximations of isochronous manifolds $\underline{M}_{\tau_1},\underline{M}_{\tau_2}$ (dashed lines). The set $\underline{M}_{\tau_1}$ inner-approximates $M_{\tau_1}$ only inside the cone $\cone$. The red region $\reg_1$ contained between $\underline{M}_{\tau_1},\underline{M}_{\tau_2}$ and the cone $\cone$ satisfies \eqref{regions-times}.}
	\label{semiglobal_approx_fig}
\end{figure} 
Thus, given the set of times $\{\tau_1,\dots,\tau_q\}$, the regions $\reg_i$ are defined as the regions between consecutive approximations $\underline{M}_{\tau_i}$ and the cone $\cone$:
\begin{equation}\label{regions}
\begin{aligned}
\reg_i := \Big\{(x,w)\in\cone:\text{ } &\mu\Big((x,w),\tau_i\Big)\leq 0,\\&\mu\Big((x,w),\tau_{i+1}\Big)\geq 0 \Big\}.
\end{aligned}	
\end{equation}
As discussed in Section \ref{tac_homogenization}, in a real-time implementation, given a measurement $x\in\real^n$, the controller checks to which region $\reg_i$ the point $(x,1)\in\real^{n+1}$ belongs, and correspondingly decides the next sampling time instant (see Figure \ref{w_1_discr}).
\begin{remark}
	The innermost region $\reg_q$ cannot be defined as in \eqref{regions}, as there is no $\tau_{q+1}$. For $\reg_q$, it suffices that we write:
	\begin{equation*}
	\reg_q := \Big\{(x,w)\in\cone: \text{ }\mu\Big((x,w),\tau_q\Big)\leq 0\Big\}
	\end{equation*}
\end{remark}

\subsection{Semiglobal Nature of Region-Based STC}
It is obvious that the regions $\reg_i$ do not cover the whole $w=1$-hyperplane (which is where the state space of the original system is mapped), i.e. there exist states $x\in\real^n$ such that the point $(x,1)\in\real^{n+1}$ does not belong to any region $\reg_i$, and thus no STC inter-sampling time can be assigned to $x$. Let us demonstrate which set $\mathcal{B}\subseteq\real^n$ is covered by the partition created and show that it can be made arbitrarily large. 

The set $\mathcal{B}$ is composed of all points $x\in\real^n$ such that $(x,1)$ belongs to any region $\reg_i$, i.e.:
\begin{equation*}
\mathcal{B}:=\{x\in\real^n:(x,1)\in\bigcup\limits_i\reg_i\}.
\end{equation*}
From the definition \eqref{regions} of regions $\reg_i$ and the scaling property \eqref{scaling of bound} of $\mu(\cdot)$, it follows that $\bigcup\limits_i\reg_i=\cone\cap\{(x,w)\in\real^n\times\real_+:\mu\Big((x,w),\tau_1\Big)\leq 0\}$. By fixing $w=1$ in the expression \eqref{cone} of $\cone$ and in $\{(x,w)\in\real^n\times\real_+:\mu\Big((x,w),\tau_1\Big)\leq 0\}$, we get:
\begin{align}
&\bullet (x,1)\in\cone \iff x\in \{x\in\real^n:|x|^2\leq\frac{r^2-\underline{w}^2}{\underline{w}^2}\}=:B_1,\label{B1}\\
&\bullet (x,1)\in\{(x,w)\in\real^n\times\real_+:\mu\Big((x,w),\tau_1\Big)\leq 0\} \iff\nonumber\\& x\in\{x\in\real^n:\mu\Big((x,1),\tau_1\Big)\leq 0\}=:B_2
\end{align}
Thus, we can write the set $\mathcal{B}$ as:
\begin{equation}\label{ball}
\mathcal{B}:=\{x\in\real^n:x\in B_1,x\in B_2\} = B_1\cap B_2.
\end{equation}
The set $B_1$ is depicted in Figure \ref{semiglobality_fig} in the Appendix. Since $r>\underline{w}$, $B_1$ is non-empty. Moreover, we can choose $\underline{w}>0$ to be arbitrarily small without changing $r$, therefore we can make the set $B_1$ arbitrarily large. Finally, $B_2$ is non-empty (as it is the set delimited by $\underline{M}_{\tau_1}$ and $\cone$) and, owing to the scaling property \eqref{scaling of bound} of $\mu(\cdot)$, it can be made arbitrarily large by selecting a sufficiently small $\tau_1$. Consequently, $\mathcal{B}$ is non-empty, and can be made arbitrarily large. Hence, region-based STC is applicable semiglobally in $\real^n$.
\begin{remark}\label{remark_number_of_regions}
	As discussed in \cite{delimpaltadakis_tac}, for fixed $\tau_1$ and $\tau_q$, as the total number $q$ of predefined times $\tau_i$ grows, the sets $\reg_i$ become smaller (since the same set $\mathcal{B}$ is partitioned into more regions $\reg_i$). This increases the accuracy of times $\tau_i$ as lower bounds of the actual ETC times $\tau(x)$, but also increases the on-line computational load of the controller
	, thus providing a trade-off between performance and computations.
\end{remark}

\section{Numerical Example}\label{numerical example section}
Let us demonstrate how the proposed STC is applied to a perturbed uncertain system, and compare its performance to the STC of \cite{small_gain_robust_etc}.  Consider the ETC system from \cite{small_gain_robust_etc}:
\begin{equation} \label{example_1}
	\dot{\zeta}_1 = \zeta_2 + g_1(\zeta_1,d_1), \quad \dot{\zeta}_2 = u(\zeta,\varepsilon_\zeta) + g_2(\zeta_2),
\end{equation}
where $|g_1(\zeta_1,d_1)|\leq0.1|\zeta_1|+0.1|d_1|$ and $|g_2(\zeta_2)|\leq0.2|\zeta_2|^2$ are uncertain, and $d_1(t)$ is an unknown bounded disturbance with $|d_1(t)|\leq4$. The ETC feedback $u$ is $u(\zeta,\varepsilon_\zeta)=-(7.02|\zeta_2+\varepsilon_{\zeta_2}-p_1|-25.515)(\zeta_2+\varepsilon_{\zeta_2}-p_1)$, where $p_1 = -2.1(\zeta_1+\varepsilon_{\zeta_1})$. The triggering function from \cite{small_gain_robust_etc}, that is to be emulated, is:
\begin{equation}\label{example_trig_fun}
	\phi(\zeta,\varepsilon_{\zeta})=|\varepsilon_{\zeta}(t)|^2-0.0049|\zeta(t)|^2-16,
\end{equation}
which guarantees convergence to a ball (practical stability). First, we bring \eqref{example_1} to the form of \eqref{perturbed_etc_sys}, by writing:
\begin{equation} \label{example_2}
\dot{\xi}(t)=\begin{bmatrix}
\dot{\zeta}_1\\\dot{\zeta}_2\\\dot{\varepsilon}_{\zeta_1}\\\dot{\varepsilon}_{\zeta_2}
\end{bmatrix}=\begin{bmatrix}
\zeta_2 + 0.1d_2\zeta_1+0.1d_1\\u(\zeta,\varepsilon_\zeta) + 0.2d_3\zeta_2^2\\-\zeta_2  -0.1d_2\zeta_1-0.1d_1\\-u(\zeta,\varepsilon_\zeta) - 0.2d_3\zeta_2^2
\end{bmatrix} = f_e(\xi(t),d(t))
\end{equation}
where $d(t)=(d_1(t),d_2(t),d_3(t))\in[-4,4]\times[-1,1]^2$, i.e. $\Delta = [-4,4]\times[-1,1]^2$. Observe that Assumption \ref{assum1} is satisfied. Then, we construct the homogeneous DI abstracting \eqref{example_2} according to \eqref{main_diff_incl}:
\begin{equation}\label{example_hom_di}
\begin{pmatrix}
\dot{\xi}(t)\\\dot{w}(t)
\end{pmatrix} = \begin{bmatrix}
\{w^{2}f_e(w^{-1}\xi,d(t)):\text{ }d(t)\in\Delta\}\\
0
\end{bmatrix},
\end{equation}
and homogenize the triggering function as follows:
\begin{equation}\label{example_hom_fun}
\tilde{\phi}(\xi(t),w(t)) = |\varepsilon_{\zeta}(t)|^2-0.0049|\zeta(t)|^2-16w^2(t).
\end{equation}
The degree of homogeneity for both \eqref{example_hom_di} and \eqref{example_hom_fun} is 1. 

Next, we derive the $\delta_i$ coefficients according to Theorem \ref{main theorem}, to determine the regions $\reg_i$. We fix $\setz=[-0.1,0.1]^2$, $\setw = [10^{-6},0.1]$ and define the sets $\Phi,\sete,\Xi$ as in \eqref{sets}, where $\Phi$ is indeed compact. By employing the computational algorithm of \cite{delimpaltadakis_tac}, $\delta_0\approx0.0353$ and $\delta_1\approx0.3440$ are obtained. We choose $r=0.099$ such that $D_r\subset\setz\times\setw$, and define $\mu\Big((x,w),t\Big)$ as in \eqref{mu}. Finally, the state-space of DI \eqref{example_hom_di} is partitioned into 434 regions $\reg_i$ with $\tau_1 \approx 63\cdot10^{-5}$ and $\tau_{i+1}=1.01\tau_i$.

We ran a number of simulations to compare our approach to the approach of \cite{small_gain_robust_etc} and to the ideal performance of the emulated ETC \eqref{example_trig_fun}. More specifically, we simulated the system for 100 different initial conditions uniformly distributed in a ball of radius 2. The simulations' duration is $5$s. As in \cite{small_gain_robust_etc}, we fix: $g_1(\zeta_1,d_1)=0.1\zeta_1\sin(\zeta_1)+0.1d_1$, $d_1=4\sin(2\pi t)$ and $g_2(\zeta_2)=0.2\zeta_2^2\sin(\zeta_2)$. The self-triggered sampler of \cite{small_gain_robust_etc} determines sampling times as follows: $t_{i+1}=t_i + \frac{1.54}{28(|x_i|+4)+29}$, where $x_i$ is the state measured at $t_i$. 
\begin{figure}[!h]
	\centering
	\includegraphics[width=3.5in]{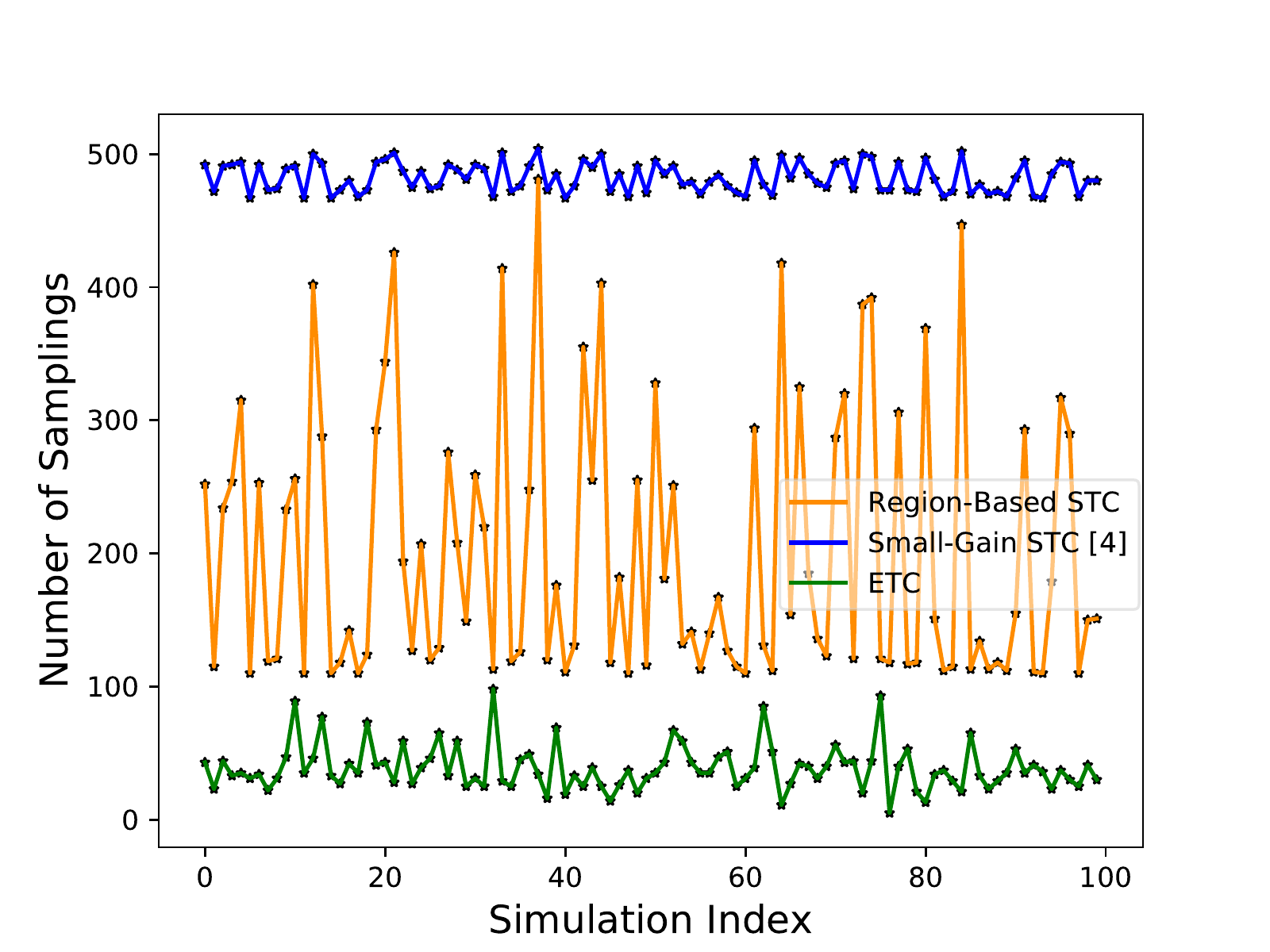}
	\caption{Number of samplings for each simulation of region-based STC (orange), STC of \cite{small_gain_robust_etc} (blue) and ETC \eqref{example_trig_fun} (green).}
	\label{comparison_fig}
\end{figure} 
The total number of samplings for each simulation of all three schemes is depicted in Fig. \ref{comparison_fig}. The average number of samplings per simulation was: 200.71 for region-based STC, 482.32 for STC \cite{small_gain_robust_etc} and 38.81 for ETC. We observe that region-based STC is in general less conservative than the STC of \cite{small_gain_robust_etc}, while being more versatile as well. Recall that the main advantage of our approach is its versatility compared to the rest of the approaches, in terms of its ability to handle different performance specifications and different types of system's dynamics, provided that an appropriate triggering function is given. For example, \cite{small_gain_robust_etc} is constrained to ISS systems, while our approach does not obey such a restriction. Finally, as expected, ETC leads to a smaller amount of samplings compared to both STC schemes. 

We, also, present illustrative results for one particular simulation with initial condition $(-1,-1)$. Figure \ref{traj_fig} shows the trajectories of the system when controlled via region-based STC and the STC from \cite{small_gain_robust_etc}, while Figure \ref{intersampling_times_fig} shows the time-evolution of inter-sampling times for the two schemes. Region-based STC led to 166 samplings, whereas the STC of \cite{small_gain_robust_etc} led to 483. We observe that, while the performance of both schemes is the same (the trajectories are almost identical in Figure \ref{traj_fig}), region-based STC leads to a smaller amount of samplings, i.e. less resource utilization. Moreover, from Figure \ref{intersampling_times_fig} we observe that, especially during the steady-state response, region-based STC performs considerably better, in terms of sampling. However, there is a small period of time in the beginning of the simulation, when the trajectories overshoot far away from the origin and region-based STC gives faster sampling. Finally, we have to note that while we have not added more comparative simulations with the other STC schemes that address disturbances or uncertainties \cite{tiberi_stc_perturbed,med_stc,italy_digital_stc} for conciseness, simulation results have indicated that region-based STC is competitive to these approaches as well.
	\begin{figure}[h]
		\centering
		\includegraphics[width=2.5in]{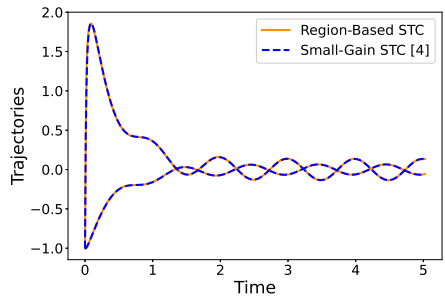}
		\caption{Trajectories of system \eqref{example_1} with initial condition $(-1,-1)$, under region-based STC (orange lines) and the STC of \cite{small_gain_robust_etc} (dashed blue lines).}
		\label{traj_fig}
	\end{figure}\quad
	\begin{figure}[h]
		\centering
		\includegraphics[width=2.5in]{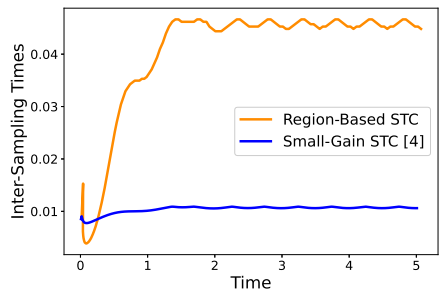}
		\caption{Evolution of inter-sampling times during a simulation with initial condition $(-1,-1)$, for region-based STC (orange line) and the STC of \cite{small_gain_robust_etc} (blue line).}
		\label{intersampling_times_fig}
	\end{figure}

\section{Conclusions}
In this work, by extending the work of \cite{delimpaltadakis_tac}, we have proposed a region-based STC scheme for nonlinear systems with disturbances and uncertainties, that is able to provide different performance guarantees, depending on the triggering function that is chosen to be emulated. By employing a framework based on DIs and introducing ETC notions therein, we have extended significant results on ETC/STC to perturbed uncertain systems. Employing the renewed results, we have constructed approximations of isochronous manifolds of perturbed/uncertain systems, enabling region-based STC. The provided numerical simulations indicate that our approach, while being more versatile, is competitive with respect to other approaches as well, in terms of inter-sampling times. It is worth noting that region-based STC provides room for numerous extensions, due to the generic way of converting ETC to STC it offers. For example, (dynamic) output-feedback could easily be considered by incorporating the (controller's and) observer's dynamics into the system description. Hence, for future work, we will consider several extensions of the proposed STC (e.g. to systems with communication delays). Apart from that, we plan on utilizing the derived approximations of isochronous manifolds to construct timing models of perturbed uncertain nonlinear ETC systems for traffic scheduling in networks of ETC loops, building upon \cite{delimpa2020traffic}.

\appendix
\begin{proof}[Proof of Theorem \ref{di_intersampling_scaling_theorem}]
	According to the definition of inter-sampling times \eqref{intersampling_times_di}, for $\tau\Big(\lambda (x,w)\Big)$ we have:
	\begin{equation*}
	\tau\Big(\lambda (x,w)\Big) = \inf\bigg\{t>0:\sup\Big\{\tilde{\phi}\Big(\reach^{\tilde{F}}_t(\lambda(x,0,w))\Big)\Big\}\geq 0\bigg\}
	\end{equation*}
	Employing the scaling property \eqref{scaling_di} and the fact that $\tilde{\phi}$ is homogeneous of degree $\theta$, we can write $\tau\Big(\lambda (x,w)\Big)$ as:
	\begin{align*}
	&\inf\bigg\{t>0:\sup\Big\{\tilde{\phi}\Big(\lambda\reach^{\tilde{F}}_{\lambda^\alpha t}((x,0,w))\Big)\Big\}\geq 0\bigg\}=\\
	&\inf\bigg\{t>0:\sup\Big\{\lambda^{\theta+1}\tilde{\phi}\Big(\reach^{\tilde{F}}_{\lambda^\alpha t}((x,0,w))\Big)\Big\}\geq 0\bigg\}=\\
	&\inf\bigg\{\lambda^{-\alpha}t>0:\sup\Big\{\tilde{\phi}\Big(\reach^{\tilde{F}}_{ t}((x,0,w))\Big)\Big\}\geq 0\bigg\}=\\
	&\lambda^{-\alpha}\tau\Big((x,w)\Big)
	\end{align*}
\end{proof}
\begin{proof}[Proof of Lemma \ref{bounding_lemma}]
	Consider the restriction of ODE \eqref{ode_lemma} to the set $\Xi$:
	\begin{equation}\label{lemma_proof_ode}
		\dot{\xi}(t) = f(\xi(t),d(t)), \quad \xi(t)\in\Xi.
	\end{equation}	
	Any solution of \eqref{lemma_proof_ode} is also a solution of \eqref{ode_lemma} (possibly not a maximal one). Note that \eqref{delta inequality} is equivalent to:
	\begin{equation}\label{lemma_proof_eq1}
		\dot{\phi}(\xi(t;\xi_0))\leq\delta_0\phi(\xi(t;\xi_0))+\delta_1,
	\end{equation}
	where $\xi(t;\xi_0)$ is any solution of \eqref{lemma_proof_ode}, with $\xi_0\in\Xi$. Observe that $\psi(y(\xi_0),t)$ is the solution to the scalar differential equation $\dot{\psi}=\delta_{0}\psi+\delta_1$ with initial condition $\psi_0=\phi(\xi_0)$:
	\begin{equation*}
		\psi(y(\xi_0),t)=\begin{bmatrix}1&0\end{bmatrix}\boldsymbol{e}^{At}y(\xi_0)
		=e^{\delta_0t}\phi(\xi_0)+\frac{e^{\delta_0t}-1}{\delta_0}\delta_1.
	\end{equation*}
	Thus, by employing the comparison lemma (see \cite{khalil1996noninear}, pp. 102-103), from \eqref{lemma_proof_eq1} we get that for any $d_{\star}(t)$ satisfying Assumption \ref{assum1} and all $\xi_0\in\Xi$:
	\begin{equation}\label{lemma_proof_eq2}
		\phi(\xi(t;\xi_0))\leq\psi(y(\xi_0),t), \quad \forall t \in[0,t_{e,d_{\star}}(\xi_0)),
	\end{equation}
	where $[0,t_{e,d_{\star}}(\xi_0))$ is the maximal interval of existence of solution $\xi(t;\xi_0)$ to ODE \eqref{lemma_proof_ode} under the realization $d(t)=d_{\star}(t)$. The time $t_{e,d_{\star}}(\xi_0)$ is defined as the time when $\xi(t;\xi_0)$, under the realization $d(t)=d_{\star}(t)$, leaves the set $\Xi$: 
	\begin{equation*}
	\begin{aligned}\small
		t_{e,d_{\star}}(\xi_0) = \sup\{\tau>0:d(t)=d_{\star}(t),\xi(t;\xi_0) \in \Xi\text{ } \forall t \in [0,\tau)\}
	\end{aligned}	
	\end{equation*}
	Since \eqref{lemma_proof_eq2} holds for all $d_{\star}(t)$ satisfying Assumption \ref{assum1}, we can conclude that $\psi(y(\xi_0),t)$ 
	bounds all solutions of DI \eqref{di_lemma} starting from $\xi_0\in\Xi$ as follows:
	 \begin{equation*}
	 	\sup\Big\{\phi\Big(\reach_t^F(\xi_0)\Big)\Big\} \leq \psi(y(\xi_0),t), \quad \forall t \in [0,\inf\limits_{d_\star}t_{e,d_{\star}}(\xi_0)).
	 \end{equation*}
	 Finally, note that $\inf\limits_{d_\star}t_{e,d_{\star}}(\xi_0)$ represents the smallest possible $\Xi$-escape time among all trajectories generated by DI \eqref{di_lemma}, i.e. $\inf\limits_{d_\star}t_{e,d_{\star}}(\xi_0) = \inf\{t>0:\reach_t^F(\xi_0) \not\subseteq\Xi\} = t_e(\xi_0)$. Hence, we can conclude that:
	 \begin{equation*}
	 	\sup\Big\{\phi\Big(\reach_t^F(\xi_0)\Big)\Big\} \leq \psi(y(\xi_0),t), \quad \forall t \in [0,t_e(\xi_0)).
	 \end{equation*} 
\end{proof}
\begin{proof}[Proof of Theorem \ref{main theorem}]
	First notice that, under item \ref{trig_fun_negativity} of Assumption \ref{assum1}, \eqref{bound init cond} holds: $\mu\Big((x,w),0\Big)=\Big(\tfrac{|(x,w)|}{r}\Big)^{\theta+1}\tilde{\phi}\Big((r\tfrac{x}{|(x,w)|},0,r\tfrac{w}{|(x,w)|})\Big)<0$ for all $(x,w)\in\real^{n+1}\setminus\{0\}$. Moreover, observe that $\mu(\cdot,\cdot)$ satisfies the time-scaling property \eqref{scaling of bound} by construction. It remains to prove that $\mu(\cdot,\cdot)$ satisfies \eqref{bound time validity} and \eqref{one-zero-crossing of bound}.	
	
	In order to prove that $\mu(\cdot,\cdot)$ satisfies \eqref{bound time validity}, as already explained in Section \ref{approximations section}, we follow the following steps: 1) we show that the coefficients $\delta_0,\delta_1$ satisfying \eqref{delta_ineq_theorem} determine a function $\psi(y((x,0,w_\star)),t)$ satisfying \eqref{theorem_proof_eq1}, 2) using the sets $\setz,\setw,\sete,\Phi,\Xi$ we show that $\psi(y((x,0,w_\star)),t)$ satisfies \eqref{theorem_proof_eq2}, and finally 3) observing that $\mu$ is obtained by a projection of $\psi$ to $D_r$, we show that $\mu$ satisfies \eqref{bound time validity} (see \eqref{theorem_proof_eq6}). 
	
	Let us formally prove it. 
	Assumption \ref{assum1} implies that $\tilde{F}(\xi,w)\subseteq\real^{2n+1}$ is non-empty, compact and convex for any $(\xi,w)\in\real^{2n+1}\setminus\{0\}$ and outer-semicontinuous. These conditions ensure existence and extendability of solutions for each initial condition \cite{filippov2013differential}. According to Lemma \ref{bounding_lemma} and since $\Xi$ is compact, the coefficients $\delta_0,\delta_1$ satisfying \eqref{delta_ineq_theorem}, determine a function $\psi(y((x,e,w_\star)),t)$ such that for all $(x,e,w)\in\Xi$: $\psi(y((x,e,w)),t)\geq\sup\Big\{\phi\Big(\reach_t^{\tilde{F}}((x,e,w))\Big)\Big\}$, 
	$\forall t\in[0,t_e\Big((x,e,w)\Big)],$
	 where $t_e\Big((x,e,w)\Big)$ is defined in \eqref{escape time} as the time when $\reach_t^{\tilde{F}}((x,e,w))$ leaves the set $\Xi$. Since we are only interested in initial conditions with the measurement error component being 0, we write:
	 \begin{equation}\label{theorem_proof_eq1}
	 \begin{aligned}
	 \small
	 \psi(y((x,0,w)),t)&\geq\sup\Big\{\phi\Big(\reach_t^{\tilde{F}}((x,0,w))\Big)\Big\},\\
	 &\forall(x,0,w)\in\Xi\text{ and }\forall t\in[0,t_e\Big((x,0,w)\Big)].
	 \end{aligned}	
	 \end{equation}
	 Observe that for all initial conditions $(x,0,w)\in\setz\times\sete\times\setw$, the sets $\Phi$ and $\sete$ are exactly such that $\xi(t;(x,0))\notin\Phi\times\sete\implies\phi(\xi(t;(x,0)))> 0$, where $\xi(\cdot)$ represents the $\xi$-component of solutions of DI \eqref{main_diff_incl} (since $w(t)$ remains constant along solutions of DI \eqref{main_diff_incl}, we neglect it). Thus, all trajectories that start from any initial condition $(x,0,w)\in\setz\times\sete\times\setw$ reach the boundary of $\Xi = \Phi\times\sete\times\setw$ after (or at) the inter-sampling time $\tau\Big((x,w)\Big)$, i.e. $\tau\Big((x,w)\Big)\leq t_e\Big((x,e,w)\Big)$ for all $(x,w)\in\setz\times\setw$. Thus, employing \eqref{theorem_proof_eq1} we write:
	 \begin{equation}\label{theorem_proof_eq2}
	 \begin{aligned}
	 \psi(y((x,0,w)),t)&\geq\sup\Big\{\phi\Big(\reach_t^{\tilde{F}}((x,0,w))\Big)\Big\},\\
	 &\forall(x,w)\in\setz\times\setw\text{ and }\forall t\in[0,\tau\Big((x,w)\Big)].
	 \end{aligned}	
	 \end{equation}
	 Now, consider any point $(x_0,w_0)\in D_r\subseteq\setz\times\setw$. Observe that $\mu\Big((x_0,w_0),t\Big) = \psi(y((x_0,0,w_0)),t)$. Thus, since $D_r\subseteq\setz\times\setw$, from \eqref{theorem_proof_eq2} we get:
	 \begin{equation}\label{theorem_proof_eq3}
	 \begin{aligned}
	 \mu\Big((x_0,w_0),t\Big)&\geq\sup\Big\{\phi\Big(\reach_t^{\tilde{F}}((x_0,0,w_0))\Big)\Big\},\\
	 &\forall(x_0,w_0)\in D_r\text{ and }\forall t\in[0,\tau\Big((x_0,w_0)\Big)].
	 \end{aligned}	
	 \end{equation}
	 To prove that $\mu(\cdot)$ satisfies \eqref{bound time validity} in the cone $\cone$ from \eqref{cone}, we have to show that \eqref{theorem_proof_eq3} holds for all $(x,w)\in\cone$. First, observe that $\cone$ is defined as the cone stemming from the origin with its extreme vertices being all points in the intersection $D_r\cap\setz\times\setw$ (see Figure \ref{semiglobality_fig}).
	 \begin{figure}[!h]
	 	\centering
	 	\includegraphics[width=2.5in]{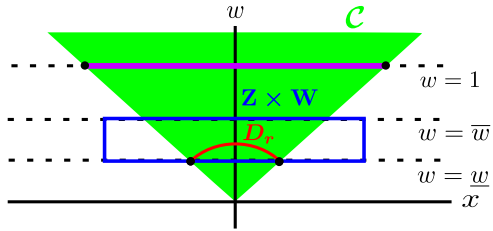}
	 	\caption{The sets $\setz\times\setw$ (region contained in blue box), $D_r$ (red spherical segment) and the cone $\cone$ (green) from \eqref{cone}. The subset of the hyperplane $w=1$ painted in purple represents the set $B_1$ from \eqref{B1}.}
	 	\label{semiglobality_fig}
	 \end{figure}
 	Thus, since $D_r$ is a spherical segment, for any point $(x,w)\in \cone$ there always exists a $\lambda>0$ and a point $(x_0,w_0)\in D_r$ such that $(x,w)=\lambda(x_0,w_0)$. If we interchange $(x_0,w_0)$ with $\lambda^{-1}(x,w)$ in \eqref{theorem_proof_eq3}, we get:
	\begin{equation}\label{theorem_proof_eq4}
	\begin{aligned}
	&\mu\Big(\lambda^{-1}(x,w),t\Big)\geq\sup\Big\{\phi\Big(\reach_t^{\tilde{F}}(\lambda^{-1}(x,0,w))\Big)\Big\},\\
	&\forall(x,w)\in \cone\text{ and }\forall t\in[0,\tau\Big(\lambda^{-1}(x,w)\Big)].
	\end{aligned}	
	\end{equation}
	But, from \eqref{scaling_di}, \eqref{scaling of bound} and Theorem \eqref{di_intersampling_scaling_theorem} we get:
	\begin{equation}\label{theorem_proof_eq5}\small
	\begin{aligned}
	& \bullet\sup\Big\{\phi\Big(\reach_t^{\tilde{F}}(\lambda^{-1}(x,0,w))\Big)\Big\} =\\ &\lambda^{-\theta-1}\sup\Big\{\phi\Big(\reach_{\lambda^{-\alpha}t}^{\tilde{F}}((x,0,w))\Big)\Big\}\\
	&\bullet\mu\Big(\lambda^{-1}(x,w),t\Big) = \lambda^{-\theta-1}\mu\Big((x,w),\lambda^{-\alpha}t\Big)\\
	&\bullet\tau\Big(\lambda^{-1}(x,w)\Big) = \lambda^{\alpha}\tau\Big((x,w)\Big)
	\end{aligned}
	\end{equation}
	Incorporating \eqref{theorem_proof_eq5} into \eqref{theorem_proof_eq4}, we finally get:
	\begin{equation}\label{theorem_proof_eq6}
	\begin{aligned}
	&\mu\Big((x,w),t\Big)\geq\sup\Big\{\phi\Big(\reach_t^{\tilde{F}}((x,0,w))\Big)\Big\},\\
	&\forall(x,w)\in \cone\text{ and }\forall t\in[0,\tau\Big((x,w)\Big)],
	\end{aligned}	
	\end{equation}
	i.e. $\mu(\cdot)$ satisfies \eqref{bound time validity} in $\cone$.
	
	Finally, let us prove that $\mu(\cdot)$ satisfies \eqref{one-zero-crossing of bound}. Observe that, since $\delta_0\geq 0$, $\delta_1>0$ and \eqref{delta_init_cond} holds, then $\mu\Big((x,w),t\Big)$ and $\dot{\mu}\Big((x,w),t\Big)$ are strictly increasing w.r.t. $t$ (for a more detailed proof, see \cite{delimpaltadakis_tac}). Thus, since $\mu\Big((x,w),0\Big)<0$, then, for any $(x,w)\in\real^{n+1}\setminus\{0\}$, $\exists!\tau^\downarrow(x,w)>0$ such that $\mu\Big((x,w),\tau^\downarrow\Big((x,w)\Big)\Big)=0$. The proof is now complete.
\end{proof}
\bibliography{bibliography_draft_stc_disturbances.bib}
\bibliographystyle{myIEEEtran} 

\end{document}